\newtheorem{theorem}{Theorem}[section]
\newtheorem{definition}[theorem]{Definition}
\newtheorem{conjecture}[theorem]{Conjecture}
\newtheorem{remark}[theorem]{Remark}
\newtheorem{claim}[theorem]{Claim}
\newcommand{\FF}{\mathbb{F}}
\newcommand{\RR}{\mathbb{R}}
\newcommand{\ZZ}{\mathbb{Z}}
\newcommand{\cE}{\mathcal{E}}
\newcommand{\cF}{\mathcal{F}}
\newcommand{\cL}{\mathcal{L}}
\newcommand{\interior}[1]{%
  {\kern0pt#1}^{\mathrm{o}}%
}
\renewcommand{\>}{\rangle}
\let\ker\relax
\let\im\relax
\DeclareMathOperator{\ker}{ker}
\DeclareMathOperator{\im}{im}
\DeclareMathOperator{\res}{res}
\DeclareMathOperator{\Mat}{Mat}
\renewcommand{\int}{\textnormal{int}}
\def\F2{\mathbb{F}_2}
\let\a\relax
\let\b\relax
\let\c\relax
\let\d\relax
\def\a{\alpha}
\def\b{\beta}
\def\c{\gamma}
\def\d{\delta}
\title{Proposals for 3D self-correcting quantum memory}
\author{Ting-Chun Lin\thanks{Department of Physics, University of California San Diego, CA, and Hon Hai Research Institute, Taipei, Taiwan. Email: \texttt{til022@ucsd.edu}.} \and Hsin-Po Wang\thanks{Department of Electrical Engineering and Graduate Institute of Communication Engineering, National Taiwan University, Taipei, Taiwan. Email: \texttt{hsinpo@ntu.edu.tw}.}\and Min-Hsiu Hsieh\thanks{Hon Hai Research Institute, Taipei, Taiwan. Email: \texttt{min-hsiu.hsieh@foxconn.com}.}}
\begin{document}

\sloppy

\maketitle

\begin{abstract}
  A self-correcting quantum memory
    is a type of quantum error correcting code
    that can correct errors passively through cooling.
  A major open question in the field is
    whether self-correcting quantum memories can exist in 3D.
  In this work, we propose two candidate constructions for 3D self-correcting quantum memories.
  The first construction is an extension of Haah's code, which retains translation invariance.
  The second construction is based on fractals with greater flexibility in its design.
  Additionally, we review existing 3D quantum codes and
    suggest that they are not self-correcting.
\end{abstract}

\section{Introduction}
A self-correcting quantum memory \cite{eczoo_self_correct} is a type of quantum error correcting code
  implemented through a Hamiltonian.
When this Hamiltonian is coupled to a sufficiently cold heat bath,
  the system is able to preserve coherent quantum information for a long duration.
This duration is called the memory time $T_{mem}$.

Self-correcting quantum memory correct errors in a fundamentally different way
  compared to traditional quantum error correcting codes.
In a traditional quantum error correcting code,
  errors are corrected through active interventions:
  syndromes are measured,
  errors are identified,
  and appropriate X or Z flips are applied to eliminate those errors.
In contrast, a self-correcting quantum memory
  corrects errors passively by leveraging its energy landscape:
  configurations with more errors have higher energies.
When the Hamiltonian is coupled to a cold bath,
  the system tends to evolve towards low-energy configurations,
  which corresponds to configurations with fewer errors.
As a result, the errors are automatically corrected
  without the need for active intervention.

The ability to correct errors passively
  gives self-correcting quantum memories an advantage over other quantum codes,
  especially for long-term storage.
This is similar to the use of magnetic tapes and hard drives for storing classical information,
  where data is stored with the intention to retrieve it several years later.
The benefit of these passive memories is their low energy consumption,
  as energy is only expended during the storage and retrieval of the information.
In contrast, traditional quantum error correcting codes
  require continuous active maintenance to preserve the information,
  resulting in a constant energy cost throughout the storage period.

But do self-correcting quantum memory exist at all?
It is known that 4D toric codes are self-correcting \cite{alicki2010thermal}
  and all 2D stabilizer codes are not self-correcting \cite{bravyi2009no,alicki2009thermalization}.
That means if our universe is 4D, we have a way to build self-correcting memories;
  and if our universe is 2D, we have no hope of building self-correcting memories within the framework of stabilizer codes.
However, since we live in a 3D universe,
  the question remains open and critical for the future of quantum technology:
Is there a quantum stabilizer code that can be realized as a 3D local Hamiltonian
  and exhibits self-correcting properties?

This question has been explored by many physicists,
  but significant challenges remain.
It became clear that 3D Hamiltonians based on topological quantum field theories (TQFTs)
  cannot be self-correcting
  due to the existence of string logical operator,
  which result in $T_{mem} \lesssim e^{a \beta - b}$.
However, for a long period, there were no models beyond TQFT.
A major breakthrough came in 2011 when Haah constructed the first 3D stabilizer code
  that is beyond the framework of TQFTs \cite{haah2011local}.
  The key feature of the code is that it is free of string logical operators.
  Haah discovered the model by searching through translation invariant codes that are free of string logical operators.
It has been observed in \cite{bravyi2011analytic} that the memory time of Haah's code grows as
  $T_{mem} \lesssim L^{3 \beta - 11}$
  as long as $L \le L_* \sim e^{\beta - 1}$.
  This implies that the maximum possible $T_{mem}$ scales as $e^{\Theta(\beta^2)}$.
However, the time is still dependent on $\beta$.
This leaves an open question:
  can we construct a family of codes of increasing size
  with $T_{mem} \to \infty$ for fixed $\beta$?
In this paper, this property is the formal definition for a code to be self-correcting.

A concept closely related to self-correction is the energy barrier.
In certain settings, it was observed that
  $T_{mem} \sim e^{\beta \cE}$,
  where $\cE$ is the energy barrier of the code.
This behavior is known as the Arrhenius law.
Consequently, much effort has been devoted to finding codes with large energy barriers.
It is known that TQFT models have a constant energy barrier, $\cE = \Theta(1)$.
Haah's code has a logarithmic energy barrier, $\cE = \Theta(\log L)$.
Later, Michnicki introduced the welded code \cite{michnicki20123} with a polynomial energy barrier, $\cE = \Theta(L^{2/3})$.
Recently, a family of works \cite{portnoy2023local,lin2023geometrically,williamson2023layer,li2024transform} achieved the optimal linear energy barrier, $\cE = \Theta(L)$.

Unfortunately, despite the large energy barriers,
  these codes likely do not satisfy $T_{mem} \to \infty$ for fixed $\beta$.
This is largely due to the issue that these codes are obtained by
  gluing large pieces of surface codes.
Therefore, these codes inherit the undesireable property of the surface code.
Further discussion can be found in \Cref{sec:non-construction-recent-codes}.

Another attempt to construct self-correcting quantum stabilizer code
  was proposed by Brell \cite{brell2016proposal}.
The idea is inspired by the fact that an Ising model on a Sierpiński carpet is known to be self-correcting \cite{vezzani2003spontaneous}.
Brell suggested that the hypergraph product of two such classical codes
  could potentially be self-correcting.
Furthermore, the Hausdorff dimension of this product is $2 + \epsilon$,
  which is less than $3$,
  making it a candidate for a 3D self-correcting quantum code.
However, as we will discuss in \Cref{sec:non-construction-brell},
  this construction is likely not self-correcting.

In this paper, we present two new attempts for constructing self-correcting quantum code in 3D.
The first construction in \Cref{sec:construction-polynomial} is an extension of Haah's code, which is also translation invariant.
These codes are relatively simple,
  likely have better parameters,
  and may be more feasible for physical implementation.
The second construction in \Cref{sec:construction-fractal} draws on ideas from fractals,
  aiming for a construction with greater flexibility,
  which might allow a mathematical proof of self-correction.
While neither proposal is conclusive,
  we hope these constructions will inspire future works towards proving the existence of self-correcting codes in 3D.

We also include a discussion that characterizes all geometrically local codes in \Cref{sec:characterization}.
Essentially, every 3D quantum CSS local code can be viewed as
  multiple 2D surface codes stacked along the squares of the integer lattice grid,
  with carefully chosen condensation data along the edges.
This perspective aligns with the framework of topological defect networks \cite{aasen2020topological,song2023topological}.

\section{Preliminary}

\subsection{Classical Error Correcting Codes}

A classical code is a $k$-dimensional linear subspace $C \subseteq \FF_q^n$
  which is specified by a parity-check matrix $H: \FF_q^n \to \FF_q^m$
  where $C = \ker H$.
$n$ is called the size and $k$ is called the dimension.
The distance $d$ is the minimum Hamming weight of a nontrivial codeword
\begin{equation}
  d = \min_{c \in C - \{0\}} |c|.
\end{equation}

The energy barrier $E$ is the minimum energy required to generate a nontrivial codeword by flipping the bits one at a time.
More precisely, given a vector $c \in \FF_q^n$,
  $|H c|$ is the number of violated checks.
In the physical context, the energy of a state is proportional to the number of violated checks,
  so we will refer to $\epsilon(c) = |H c|$ as the energy of vector $c$.
We say a sequence of vectors $\gamma_{a\to b} = (c_0=a, c_1, ..., c_t=b)$ is a walk from $a$ to $b$
  if $c_i, c_{i+1}$ differ by exactly one bit $|c_i - c_{i+1}| = 1$.
The energy of a walk $\epsilon(\gamma) = \max_{c_i \in \gamma} \epsilon(c_i)$ is defined as the maximum energy reached among the vectors in $\gamma$.
Finally, the energy barrier $\cE$ is defined as the minimum energy among all walks $\gamma$ from $0$ to a nontrivial codeword
\begin{equation}
  \cE = \min_{\gamma_{0\to c}, c \in C - \{0\}} \epsilon(\gamma_{0\to c}).
\end{equation}
Intuitively, the energy barrier is another way to characterize the difficulty for a logical error to occur, other than distance.

We say the classical code is a low-density parity-check (LDPC) code if each check interacts with a bounded number of bits, i.e. $H$ has a bounded number of nonzero entries in each row.

\subsection{Quantum CSS Codes}

A quantum CSS code is specified by two classical codes $C_x, C_z$
  represented by their parity-check matrices $H_x: \FF_q^n \to \FF_q^{m_z}, H_z: \FF_q^n \to \FF_q^{m_x}$
  which satisfy $H_x H_z^T = 0$.
$n$, $m_x$, and $m_z$ are the number of qubits, the number of $X$ and $Z$ checks (i.e. stabilizer generators), respectively.
The code consists of $X$ and $Z$ logical operators represented by $C_x$ and $C_z$,
  $X$ and $Z$ stabilizers represented by $C_z^\perp$ and $C_x^\perp$,
  and nontrivial $X$ and $Z$ logical operators represented by $C_x - C_z^\perp$ and $C_z - C_x^\perp$.
The dimension $k = \dim C_x - \dim C_z^\perp$ is the number of logical qubits.
The distance is $d = \min(d_x, d_z)$ where
\begin{equation}
  d_x = \min_{c_x \in C_x - C_z^\perp} |c_x|,\qquad
  d_z = \min_{c_z \in C_z - C_x^\perp} |c_z|
\end{equation}
  are called the $X$ and $Z$ distance
  which are the minimal weights of the nontrivial $X$ and $Z$ logical operators.
The energy barrier is $\cE = \min(\cE_x, \cE_z)$
  where $\cE_x$ and $\cE_z$ are the minimum energy required to create a nontrivial $X$ and $Z$ logical operator
\begin{equation}
  \cE_x = \min_{\gamma_{0\to c_x}, c_x \in C_x - C_z^\perp} \epsilon_x(\gamma_{0\to c_x}),\qquad
  \cE_z = \min_{\gamma_{0\to c_z}, c_z \in C_z - C_x^\perp} \epsilon_z(\gamma_{0\to c_z}),
\end{equation}
  where $\epsilon_x(c_x) = |H_x c_x|$ is the number of violated $Z$-checks of the $X$ Pauli operator $c_x$
  and $\epsilon_z(c_z) = |H_z c_z|$ is the number of violated $X$-checks of the $Z$ Pauli operator $c_z$.

We say the quantum code is a low-density parity-check (LDPC) code if each check interacts with a bounded number of qubits and each qubit interacts with a bounded number of checks, i.e. $H_x$ and $H_z$ have a bounded number of nonzero entries in each column and row.

We remark that a quantum CSS code naturally corresponds to a chain complex.
In particular, $H_x$ and $H_z$ induce the chain complex
\begin{equation}
  X: \FF_q^{m_x} \xrightarrow{\delta_0 = H_z^T} \FF_q^n \xrightarrow{\delta_1 = H_x} \FF_q^{m_z}.
\end{equation}
In reverse, a chain complex also defines a quantum CSS code.

\subsection{Local embedding of codes}

Given a code with $n$ bits (or qubits) and $m$ checks.
We say the code has a local embedding to a region $B \subseteq \RR^D$,
  if there exists a map from the bits and checks to the the region,
  $I: [n] \sqcup [m] \to B$,
  such that
  (1) the interaction is local, meaning that the Euclidean distance between each check and the qubit it interacts with is at most a constant,
  (2) the number of bits and checks in each unit sphere is at most a constant.
Additionally, we say a code is local in $D$ dimension if the code has a local embedding to $\RR^D$.

\subsection{Memory time}

The straightforward definition of memory time can be somewhat cumbersome.
This leads to the different notions of memory time.
  Some of which are mathematically different,
  while others are equivalent.
Since this paper does not delve into rigorous mathematical proofs involving memory time,
  we will keep the discussion concise
  and provide only one definition.

We first define memory time for classical codes.
To do so, we must first model how a system evolves when coupled to a heat bath at inverse temperature $\beta$.
In our context, the coupling is assumed to be local,
  and the system's evolution can be described by Glauber dynamics.
Glauber dynamics is a Markov process
  that describes how a state transitions to a new state.
Let $\tilde c(t): \FF_q^n \to \RR$ be a random variable over the words at time $t$.
Initially, at $t = 0$, $\tilde c(0)$ is set to be a codeword.
As time progresses, the state evolves,
  and as $t\to \infty$, $\tilde c(t)$ approaches the Gibbs distribution
  where $\Pr[\tilde c(t) = c] \,\propto\, e^{-\beta |H c|}$.

A choice of an evolution with this property is
\begin{equation}
  \frac{d}{dt} \Pr[\tilde c(t) = c] = \sum_{c' \in \FF_q^n} \Gamma(c' \to c) \Pr[\tilde c(t) = c'] - \sum_{c' \in \FF_q^n} \Gamma(c \to c') \Pr[\tilde c(t) = c]
\end{equation}
where $\Gamma(c' \to c) = \frac{e^{-\beta |H c|}}{e^{-\beta |H c'|} + e^{-\beta |H c|}}$.
It is straightforward to verify that $\Pr[\tilde c(t) = c] \,\propto\, e^{-\beta |H c|}$
  is a stable distribution under this evolution.

Memory time $T_{mem}$ is defined as the time
  when we can no longer reliably decode the stored information.
Let $D: \FF_q^n \to \FF_q^n$ be a decoder
  such that $D(c + z) = D(c) + z$, where $z$ is a codeword.
This condition implies that the decoder only depends on the syndrome.
We define $T_{mem}$ as the time such that
\begin{equation}
  T_{mem} = \sup_{t \ge 0} (\Pr[D(\tilde c(t)) = 0] \ge 2/3)
\end{equation}
with the initial condition $\tilde c(0) = 0$.
This expression says that $T_{mem}$ is the latest time
  at which the decoder can successfully recover the correct information
  at least $2/3$ of the time.
Note that definition of memory time $T_{mem}$
  depends on the code $C$, the decoder $D$, and the inverse temperature $\beta$.
However, for simplicity,
  the dependence on the decoder is often omitted.
When we want to explicitly show these dependencies,
  we write $T_{mem}(C, D, \beta)$.

We say that a family of classical codes $\{C_i\}$ (with decodes $\{D_i\}$) is self-correcting
  if there exists a constant $\beta > 0$,
  such that $T_{mem}(C_i, D_i, \beta) \to \infty$ as $i \to \infty$.

\section{Construction 1: based on polynomial} \label{sec:construction-polynomial}

In a series of works (see thesis \cite{haah2013lattice}),
  Haah introduced a general framework for describing translation-invariant quantum codes.
He then focused on a particular family in this framework
  to search for codes without logical string operators.
However, it is believed that this family does not include any self-correcting codes.
Therefore, we introduce a larger family of codes,
  which we believe does contain self-correcting codes.

\subsection{Polynomial formalism for translation-invariant codes}

Translation-invariant codes can be succinctly described through polynomials.
Consider a scenario where a qubit is placed at each integer lattice point in $\ZZ^3$.
We associate each lattice point $(i,j,k)\in \ZZ^3$
  with the monomial $x^i y^j z^k$.
We can then represent a check that acts on qubits in $S$ by the Laurent polynomial
  $f = \sum_{(i,j,k) \in S} x^i y^j z^k \in \FF_2[x, x^{-1}, y, y^{-1}, z, z^{-1}]$.

For a translation-invariant code,
  if $f$ is a check,
  then shifting it by one unit along the X axis is also a check,
  which corresponds to $xf$.
More generally, if $f$ is a check,
  then $rf$ is also a check for any monomial $r$.
This means that, although a translation-invariant code
  may have many checks,
  it is sufficient to provide a generating set of Laurent polynomials,
  which is typically of finite size.\footnote{
    This succinct representation is analogous to the generator polynomial in the study of cyclic codes.
}

In general, we can place more than one qubit at each lattice site.
This means each check is now associated with a vector of Laurent polynomials $(f_1, f_2, ...)$.
  Here, the check acts on the first qubits at locations $S_1$, the second qubits at locations $S_2$, and so on,
  where $f_a = \sum_{(i,j,k) \in S_a} x^i y^j z^k$.
For example, in this language, the toric code is described by
\begin{equation}
  h_X = \begin{pmatrix} 1+x^{-1} \\ 1+y^{-1} \end{pmatrix}, \qquad
  h_Z = \begin{pmatrix} 1+y \\ 1+x \end{pmatrix}.
\end{equation}
See \Cref{fig:toric-code}.

\begin{figure}
  \centering
  \includegraphics[width=0.6\linewidth]{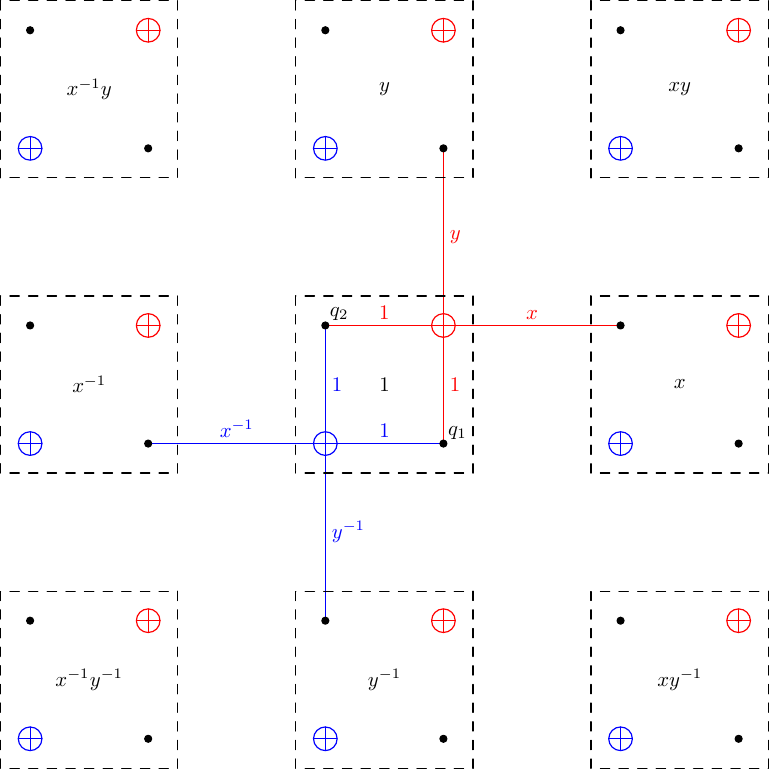}
  \caption{Each site contains one X-check (blue), two qubits (black), and one Z-check (red).
            The bottom right qubit at each site is the first qubit
              and the top left qubit at each site is the second qubit.
            We see that the X-check in the picture acts on
            the first qubit at $1$, the second qubit at $1$,
            the first qubit at $x^{-1}$, and the second qubit at $y^{-1}$.
            Similarly, the Z-check in the picture acts on
            the first qubit at $1$, the second qubit at $1$,
            the first qubit at $y$, and the second qubit at $x$.
            Therefore, $h_X = \big(\begin{smallmatrix} 1+x^{-1} \\ 1+y^{-1} \end{smallmatrix}\big)$
              and $h_Z = \big(\begin{smallmatrix} 1+y \\ 1+x \end{smallmatrix}\big)$.}
  \label{fig:toric-code}
\end{figure}

Recall that quantum CSS codes must satisfy the condition $H_X^T H_Z = 0$.
This implies that for any X check $(f_1, f_2, ...)$ and Z check $(f'_1, f'_2, ...)$,
  the total number of monomials shared between $f_a$ and $f'_a$
  across all $a$ must be even.
In the toric code example,
  the checks $h_X$ and $h_Z$ both contain the monomial $1$ for the first set of qubits and $1$ for the second set of qubits.
Furthermore,
  the checks $x h_X$ and $h_Z$ both contain the monomial $1$ for the first set of qubits and $x$ for the second set of qubits.
In fact, the condition on the number of shared monomials
  can be succinctly captured by the expression $\bar h_X^T h_Z = 0$,
  where $\bar f$ denotes the conjugate Laurent polynomial defined as $\bar f(x, y, z) = f(x^{-1}, y^{-1}, z^{-1})$.

We summarize the discussion above with the following definition of translation-invariant quantum codes.
\begin{definition}
  A translation-invariant quantum CSS code in D dimension over $\FF_q$ is specified by two matrices
  $h_X \in \Mat(n, m_x, R), h_Z \in \Mat(n, m_z, R)$,
  where $R = \FF_q[x_1^{\pm 1}, ..., x_D^{\pm 1}]$,
  such that $\bar h_X^T h_Z = 0$.
\end{definition}
This framework can also be applied to classical codes.
\begin{definition}
  A translation-invariant classical code in D dimension over $\FF_q$ is specified by a matrix
  $h \in \Mat(n, m, R)$,
  where $R = \FF_q[x_1^{\pm 1}, ..., x_D^{\pm 1}]$.
\end{definition}

\subsection{A quantum code family considered by Haah}

Satisfying the commuting condition
  $\bar h_X^T h_Z = 0$ is often the primary challenge in constructing quantum codes.
The goal is to impose enough structure on $h_X, h_Z$ to satisfy the commuting condition,
  while having sufficient variety to generate nontrivial properties.
A 3D local quantum code family that satisfies the commuting condition,
  while having more variety than the toric code,
  can be constructed by selecting two arbitrary Laurent polynomials
  $f, g \in \FF_q[x^{\pm 1}, y^{\pm 1}, z^{\pm 1}]$
  and define
\begin{equation}
  \bar h_X^T = \begin{pmatrix} f & g \end{pmatrix}, \qquad
  h_Z = \begin{pmatrix} g \\ -f \end{pmatrix}.
\end{equation}
This construction satisfies $\bar h_X^T h_Z = 0$ because $fg - gf = 0$,
  which can be illustrated by the following diagram.

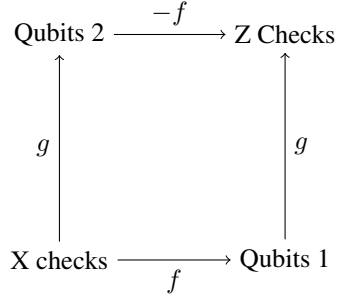
\begin{figure}[H]
  \centering
  \begin{tikzpicture}[scale=3]
    \draw (0,0)node[name=X]{X checks}
          (1,0)node[name=Q1]{Qubits 1}
          (0,1)node[name=Q2]{Qubits 2}
          (1,1)node[name=Z]{Z Checks};
    \draw[->] (X) --node[auto,swap]{$f$} (Q1);
    \draw[->] (X) --node[auto]{$g$} (Q2);
    \draw[->] (Q1) --node[auto,swap]{$g$} (Z);
    \draw[->] (Q2) --node[auto]{$-f$} (Z);
  \end{tikzpicture}
  \caption{A pictorial representation of a family of solutions.}
  \label{fig:Haah-code}
\end{figure}

By searching through models where $f, g$ are linear combinations of $1, x, y, z, xy, yz, zx, xyz$,
  Haah found a list of models without string operators (beyond TQFT) \cite{haah2011local}.
The most well-known among these has
  $f = 1 + x + y + z$ and $g = 1 + xy + yz + zx$.
However, as mentioned in the introduction, this model is not a quantum memory.
Therefore, we need to explore a broader family of codes to find candidates for quantum memories.

One generalization is to consider $f, g$ with higher degree polynomials.
However, this approach does not yield self-correcting codes
  due to the existence of self-similar patterns,
  which will be discussed in \Cref{sec:m-1}.
Instead, we explore a more complex parametrization for $h_X$ and $h_Z$
  that still satisfies $\bar h_X^T h_Z = 0$.

\subsection{A new quantum code family}

If we examine why $\bar h_X^T h_Z = 0$ in the above code family,
  the key reason is due to the square in \Cref{fig:Haah-code}.
So a natural generalization is to consider models with more squares,
  for example, by taking the Cartesian product of two complete bipartite graphs.
We can then assign functions to the edges, which induce well-defined families of quantum codes.

Let $I_1, J_1$ be vertex sets of size $m_1$
  and $I_2, J_2$ be vertex sets of size $m_2$.
We form two complete bipartite graphs $G_1, G_2$; one between $I_1, J_1$ and the other between $I_2, J_2$.
After the Cartesian product of $G_1$ and $G_2$,
  we obtain a 4-partite square complex $S$.

With the square complex $S$,
  we associate the vertices in $I_1 \times I_2$ with X-checks,
  the vertices in $(I_1 \times J_2 ) \cup (J_1 \times I_2)$ with qubits,
  and the vertices in $J_1 \times J_2$ with Z-checks.
So, there is a total of $m_1 m_2$ X-checks, $2 m_1 m_2$ qubits, and $m_1 m_2$ Z-checks per lattice site.

We now describe the two matrices $h_X, h_Z$.
We associate each edge in $G_1$, $(i_1, j_1) \in I_1 \times J_1$,
  with a function $f_{i_1, j_1}$
  and each edge in $G_2$, $(i_2, j_2) \in I_2 \times J_2$,
  with a function $g_{i_2, j_2}$.
These functions on the edges are naturally carried to the edges of the square complex.
For example, the edge $((i_1, i_2), (j_1, i_2))$ in the square complex $S$,
  is associated to the function $f_{i_1, j_1}$.
These functions induce $h_X, h_Z$ with appropriate conjugation and negation.
Overall, we have
\begin{align}
  (h_X)_{(i_1, i_2), (j_1, i_2)} &= \bar f_{i_1, j_1}, \\
  (h_X)_{(i_1, i_2), (i_1, j_2)} &= \bar g_{i_2, j_2}, \\
  (h_Z)_{(j_1, i_2), (j_1, j_2)} &= g_{i_2, j_2}, \\
  (h_Z)_{(i_1, j_2), (j_1, j_2)} &= -f_{i_1, j_1}.
\end{align}
It is straightforward to verify that $\bar h_X^T h_Z = 0$
  and the quantum code is well-defined.

For example, when $m_1 = 1$ and $m_2 = 1$,
  this recovers the family considered by Haah.
When $m_1 = 2$ and $m_2 = 2$, we have
\makeatletter
\renewcommand*\env@matrix[1][*\c@MaxMatrixCols c]{%
  \hskip -\arraycolsep
  \let\@ifnextchar\new@ifnextchar
  \array{#1}}
\makeatother
\begin{equation}
  \bar h_X^T =
    \begin{pmatrix}[cccc|cccc]
      f_{11} & 0 & f_{21} & 0 & g_{11} & g_{21} & 0 & 0 \\
      0 & f_{11} & 0 & f_{21} & g_{12} & g_{22} & 0 & 0 \\
      f_{12} & 0 & f_{22} & 0 & 0 & 0 & g_{11} & g_{21} \\
      0 & f_{12} & 0 & f_{22} & 0 & 0 & g_{12} & g_{22} \\
    \end{pmatrix},\qquad
  h_Z =
    \begin{pmatrix}
      g_{11} & g_{12} & 0 & 0 \\
      g_{21} & g_{22} & 0 & 0 \\
      0 & 0 & g_{11} & g_{12} \\
      0 & 0 & g_{21} & g_{22} \\ \hline
      -f_{11} & 0 & -f_{12} & 0 \\
      0 & -f_{11} & 0 & -f_{12} \\
      -f_{21} & 0 & -f_{22} & 0 \\
      0 & -f_{21} & 0 & -f_{22} \\
    \end{pmatrix},
\end{equation}
where the labels are ordered by $(1,1), (1,2), (2,1), (2,2)$.

We conjecture that within this family, there exists a code that saturates several quantum code upper bounds
  and has self-correcting property.
\begin{conjecture}
  For a suitable choice of integers $m_1, m_2 \ge 2$, a finite field $\FF_q$, and functions $f_{i_1, j_1}, g_{i_2, j_2}$,
    for $i_1, j_1 \in [m_1]$, $i_2, j_2 \in [m_2]$,
    which are linear combinations of $1, x, y, z, xy, yz, zx, xyz$ over $\FF_q$,
    the corresponding translation invariant 3D quantum code on a cube of size $L$,
    with appropriate boundary conditions,
    has parameters $[[n = \Theta(L^3), k = \Theta(L), d = \Theta(L^2), \cE = \Theta(L)]]$
    and a memory time of $\exp(\Theta(L))$.
\end{conjecture}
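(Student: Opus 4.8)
We sketch a possible route to the conjecture, organized around the algebraic parameters $(n,k,d)$, the energy barrier $\cE$, and the dynamical statement $T_{mem} = \exp(\Theta(L))$. The first move is to pin down a concrete witness $(m_1, m_2, q, \{f_{i_1 j_1}\}, \{g_{i_2 j_2}\})$ by exhausting the finite search space of multilinear cube polynomials over small $\FF_q$ for bounded $m_1, m_2$, pre-filtered by Haah's no-string-operator test. The count $n = 2 m_1 m_2 L^3$ is immediate from the construction. For $k$ and $d$, I would exploit that the code's chain complex over $R = \FF_q[x^{\pm 1}, y^{\pm 1}, z^{\pm 1}]$ is the total complex of the tensor product $C^{(1)} \otimes_R C^{(2)}$ of the two length-two complexes $C^{(1)} = [\,R^{m_1} \xrightarrow{F} R^{m_1}\,]$ and $C^{(2)} = [\,R^{m_2} \xrightarrow{G} R^{m_2}\,]$, with $F = (f_{i_1 j_1})$ and $G = (g_{i_2 j_2})$. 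A Künneth-type decomposition then writes the qubit-degree homology over the size-$L$ torus $R_L = R/(x^L-1, y^L-1, z^L-1)$ in terms of $H_*(C^{(1)})$, $H_*(C^{(2)})$ and the $\mathrm{Tor}^R_*$ terms relating them, so that $k = \dim_{\FF_q} H_1$ becomes a computable function of $L$ and $q$. To force $k = \Theta(L)$ along an arithmetic progression of $L$, I would design $F, G$ so that the common vanishing locus of their entries in $(\overline{\FF}_q^\times)^3$ is exactly one-dimensional: a curve contributes $\Theta(L)$ torus points, hence $\Theta(L)$ homology, while one-dimensionality (rather than a surface) keeps $k$ from growing faster and nonemptiness keeps it from being $O(1)$. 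With open instead of periodic boundaries one additionally tailors the truncation to remove spurious boundary logicals.

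For the distance, the upper bound $d = O(L^2)$ comes from an explicit membrane representative: take a $1$-cycle of one tensor factor and tensor it against an $O(L)$-thick profile in the remaining direction, obtaining a logical operator supported on an $O(L)\times O(L)$ slab. The lower bound $d = \Omega(L^2)$ is where the no-string property earns its keep: a logical operator of weight $o(L^2)$ would, after slicing the cube into $L$ layers and running a cleaning/compression argument on the thinnest run of consecutive layers, be deformable modulo stabilizer into a tube of bounded cross-section, i.e.\ a string operator, contradicting the search certificate. Turning the heuristic ``few qubits $\Rightarrow$ thin $\Rightarrow$ string'' into a theorem requires a quantitative local-expansion (isoperimetric) estimate for $F$ and $G$, which again is a finite check on the fixed witness.

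For the energy barrier, the upper bound $\cE = O(L)$ is an explicit walk: grow the $O(L)\times O(L)$ membrane one qubit at a time along a sweeping front, keeping the violated-check set confined to an $O(L)$-size frontier. The lower bound $\cE = \Omega(L)$ is the crux of self-correction. I would introduce a bounded-Lipschitz potential $\Phi : \F2^n \to \RR$, concretely the overlap of a configuration with a fixed ``half-cube'' logical representative reduced modulo stabilizers, with $\Phi(0) = 0$, $\Phi(\ell) = \Omega(L)$ on the logical operator, and $|\Phi(c) - \Phi(c')| = O(1)$ across single-qubit flips; the content is then a structural lemma that every configuration with $\Phi$ in its middle third violates $\Omega(L)$ checks, so any walk from $0$ to $\ell$ must pay energy $\Omega(L)$ at the moment $\Phi$ crosses its midpoint. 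This structural lemma --- ``no cheap partial logical operator'' --- is an energy-versus-distance tradeoff that genuinely fails for string-operator codes, so the no-string certificate must enter in an essential way; establishing it is the \textbf{main obstacle at the level of static invariants}.

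Finally, the memory time. Given $\cE = \Theta(L)$ one is tempted to quote an Arrhenius law $T_{mem} \gtrsim e^{\beta \cE}$, but this implication is false in general: the glued-surface-code families of \Cref{sec:non-construction-recent-codes} have $\cE = \Theta(L)$ yet are believed not to be self-correcting, because entropic valleys and a proliferation of near-minimal-energy paths let logical errors accumulate. The honest step is therefore a direct analysis of the Glauber dynamics --- either a Peierls-type contour bound in the spirit of the 4D toric code, or an Alicki--Fannes--Horodecki-type relaxation estimate --- showing that for every fixed $\beta$ below a threshold the logical-error rate per unit time is at most $e^{-\Omega(\beta L)}\poly(L)$. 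This demands control of the full \emph{free-energy} barrier, counting configurations at each energy level rather than only the minimal barrier height, and ruling out self-similar low-energy ``highways'' of the kind that plague high-degree polynomial codes (\Cref{sec:m-1}). I expect this upgrade --- from a static energy barrier to a dynamical memory-time bound, by excluding entropic shortcuts --- to be the decisive difficulty, and it is precisely why the statement is offered as a conjecture rather than a theorem.
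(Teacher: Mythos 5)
This statement is labeled a \emph{conjecture} in the paper, and the paper does not offer a proof of it; it offers only the construction, some heuristic discussion distinguishing $m\ge 2$ from $m=1$, Monte--Carlo evidence at small $L$, and two sub-conjectures (an isoperimetric bound on syndrome weight, and a bound on the number of low-energy irreducible words) that would serve as inputs to a Peierls-style argument. Your proposal is thus not being measured against an actual proof, and your overall framing --- that this is an open program, and that the upgrade from a static energy barrier to a dynamical memory-time bound by excluding entropic shortcuts is the decisive difficulty --- matches the paper's own assessment closely. Your observations that $\cE=\Theta(L)$ alone is insufficient (citing the glued-surface-code counterexamples) and that one must control a free-energy barrier by counting irreducible low-energy configurations are exactly the content of the paper's Conjectures 3.5 and 3.6 and its discussion of why one studies irreducible words.

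There is, however, one step in your proposal that is in direct tension with the paper's stated strategy. To force $k=\Theta(L)$ you propose choosing $F,G$ so that the common vanishing locus in $(\overline{\FF}_q^\times)^3$ is a curve, so that $\Theta(L)$ torus points contribute $\Theta(L)$ to the homology on the periodic lattice of side $L$. But that makes the code \emph{degenerate} on a torus (i.e.\ $k>0$ for some finite torus), and the paper's Remark~3.7 invokes Haah's Theorem~4.3 to say that any exact, degenerate 3D translation-invariant code has fractal generators, which are expected to proliferate and to destroy self-correction. The paper's proposed escape route is the opposite of yours: keep $k=0$ on \emph{every} torus and harvest $k=\Theta(L)$ purely from the boundary conditions on the cube. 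So your K\"unneth/variety route to $k$, as stated, would reintroduce the fractal-generator obstruction that the construction is designed to avoid. If you want to pursue a variety-theoretic count for $k$, you would need to do it relative to the cube's boundary conditions (a relative or Mayer--Vietoris computation involving the boundary data), not the torus homology, and you would need to check separately that the resulting code remains free of the fractal self-similar patterns discussed in \Cref{sec:m-1}.

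Two smaller points. First, your no-string $\Rightarrow d=\Omega(L^2)$ cleaning argument is the right heuristic, but the paper does not supply the ``quantitative local-expansion estimate'' you flag as needed, so that sub-step remains open on both sides. Second, your energy-barrier lower bound via a Lipschitz potential $\Phi$ and a ``no cheap partial logical operator'' lemma is a sensible abstraction, and it is morally the same as what the paper's isoperimetric Conjecture~3.5 would buy, but the paper frames it purely in terms of $|hc|\ge\lambda|c|^{1/2}$ for the 2D classical surrogate rather than via a potential function; the two formulations should be reconciled if you want your outline to plug into the paper's proposed proof of the classical case first.
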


Empirically, for a large enough $q$, there seem to be a qualitative difference between
  codes with $m_1, m_2 = 1$ and $m_1, m_2 \ge 2$.
  This is what set us apart from previous constructions.
This dichotomy will be explore more in \Cref{sec:m-1,sec:m-2}.
Indeed, we suspect that $m_1, m_2 = 2$ is enough to find an instance of self-correcting quantum code, with large enough $q$.

Even though the conjecture is stated for qudits over $\FF_q$,
  when $q$ is a power of $2$,
  we can apply alphabet reduction
  and convert the code over $\FF_q$ into a code over $\FF_2$
  (see for example \cite{golowich2024asymptotically,nguyen2024good}).

\begin{remark}
  In the above description,
    we focused on the construction based on product of two complete bipartite graphs.
  More generally, one can replace the product structure with a ``labeled'' square complex,
    which satisfies the property that opposite edges of every square have the same label.
  We can then associate each horizontal edge with label $a \in A$ with a function $f_a$,
    and each vertical edge with label $b \in B$ with a function $g_b$.
\end{remark}

\begin{remark}
  In \cite[Theorem 4.3]{haah2013lattice},
    Haah discussed a type of no-go result for 3D translation-invariant codes.
  Specifically,
    the theorem states that
    any 3D translation-invariant code
    that is exact (i.e. has no nontrivial local logical operator)
    and degenerate (i.e. $k > 0$ for some torus of finite size $L$)
    have fractal generators.
  These fractal generators can proliferate,
    which likely implies that such a code cannot be self-correcting.
  We note that our construction may potentially circumvent this barrier
    by having $k = 0$ for every torus of size $L$,
    while allowing $k > 0$ for other types of boundary conditions.
\end{remark}

\subsection{A new classical code family}

This construction strategy can also be applied to 2D classical codes,
  where $h$ is a $m$ by $m$ matrix
  and each entry contains a linear function over $x, y$.
We have a similar conjecture regarding the properties of these classical codes.

\begin{conjecture}
  For a suitable choice of integer $m \ge 2$ and functions $f_{i,j}$, for $i, j \in [m]$, which are linear combinations of $1, x, y, xy$ over $\FF_q$,
    the corresponding translation invariant 2D classical code
    with $h_{i,j} = f_{i,j}$
    on a torus of width $L$
    has parameters $[n = \Theta(L^2), k = \Theta(L), d = \Theta(L^2), \cE = \Theta(L)]$
    and has memory time $\exp(\Theta(L))$.
  Additionally, the code induced from $h^T$ have the same properties.
\end{conjecture}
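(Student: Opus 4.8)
All four parameters should reduce to structural facts about the kernel module of $h$ over the appropriate coefficient ring; two of the steps below are, I expect, genuinely hard, and they are exactly the two on which self-correction rests. \emph{Step 1 (the right algebraic model).} On a torus of width $L$ the code is $C_L=\ker\paren{h\colon R_L^m\to R_L^m}$ with $R_L=\FF_q[x,y]/(x^L-1,y^L-1)$ and $h$ the $m\times m$ matrix of linear Laurent polynomials. If $\gcd(L,q)=1$ then, over $\overline{\FF_q}$, $R_L$ splits as a product of $L^2$ copies of $\overline{\FF_q}$ indexed by pairs of $L$-th roots of unity; $h$ block-diagonalizes, the minimum-weight codeword is forced onto a single line of the torus (weight $\Theta(L)$) for essentially any choice, and the family is not self-correcting --- this is precisely where the $m=1$ family lives and degenerates. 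So the conjecture must be read for $L$ a power of $p=\operatorname{char}\FF_q$ (or with a large $p$-part). Writing $u=x-1$, $v=y-1$, one then has $R_L\cong A:=\FF_q[u,v]/(u^L,v^L)$, a Gorenstein Artinian \emph{local} ring, and $h=h_0+h_1u+h_2v+h_3uv$ with constant $h_i\in\Mat(m,m,\FF_q)$. Everything becomes a question about the finite-length $A$-submodule $M:=\ker\paren{h\colon A^m\to A^m}$: one has $k=\dim_{\FF_q}M$, while $d$ and $\cE$ are the minimum weight and energy barrier of $M$ measured in the \emph{physical} monomial basis $\set{x^iy^j}$ --- not the $u^av^b$ basis, in which weight is badly non-preserved. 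Since $h$ is square, $\ell(\ker h)=\ell(\operatorname{coker}h)$, and over the Gorenstein ring $A$ the module $\ker h^{T}\cong\operatorname{Hom}_A(\operatorname{coker}h,A)$ is the Matlis dual of $\operatorname{coker}h$; hence the $h$- and $h^{T}$-codes share $k$, and a parallel argument should give them the same $d$ and $\cE$. I treat only $h$.

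\emph{Step 2 ($k=\Theta(L)$).}
If $h_0=h(1,1)$ is invertible then so is $h$ over $A$ and $k=0$; so impose $\det h_0=0$. One wants $M$ to have $A$-module length exactly $\Theta(L)$, which can be read off from a Smith-type normal form of $h$ over $A$ (obtained by inverting $u$ and $v$ separately and patching): all but a bounded number of invariant factors should be units, the rest having combined $\FF_q$-annihilator dimension $\Theta(L)$. This is the two-variable analogue of the lattice-module computation in \cite{haah2013lattice}; the governing data --- the ranks of various $\FF_q$-specializations of $h$ together with a few resultant-type non-degeneracies --- are generic once $q$ is large. Which non-unit invariant factors are admissible is further pinned down by the needs of Step 3, and that interplay is the delicate point.

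\emph{Step 3 ($d=\Theta(L^{2})$).}
This is the crux, and the step I least expect to carry through. The obstruction is what kills $m=1$: a monomial $u^{a}v^{b}=(x-1)^{a}(y-1)^{b}$ has physical weight $\prod_i(a_i+1)\prod_i(b_i+1)$, with $a_i,b_i$ the base-$p$ digits of $a,b$, which is $O(L)$ unless both $a$ and $b$ are close to $L-1$; so any codeword equal to --- or having in its $A$-orbit --- such a ``thin'' monomial has physical weight $O(L)$, and for $m=1$ one always occurs (e.g. $(x-1)^{L-1}\in\ker(x-1)$ has all $L$ coefficients nonzero, hence weight $L$). For $m\ge2$ the kernel $M$ is a submodule of the free module $A^m$ rather than an ideal of $A$, which leaves room to make every codeword fat by ``spreading'': whenever one coordinate is thin the remaining coordinates are forced to physical weight $\Theta(L^2)$. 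The difficulty is that $M$ must be stable under multiplication by $u$ and by $v$, and these operations tend to collapse fat elements back to thin ones. The goal is to exhibit $h$ (with $m\ge2$ and $q$ large) admitting a structure theorem for $M$ strong enough that every nonzero element has physical weight $\Theta(L^2)$; a natural sufficient form would be that $M$ be ``confined and genuinely $2$-dimensional'' --- its $A$-generators, and all their nonzero monomial multiples $u^{a}v^{b}g$, of physical weight $\Theta(L^2)$ --- but even granting such a structure one still has to rule out low-weight $\FF_q$-cancellations. Showing that such $h$ exist --- rather than that every $m\ge2$ choice still admits a thin codeword --- is what separates this proposal from Haah's code and from $m=1$, and is where the hypotheses $m\ge2$, $q$ large must do all the work; I would not be surprised if this forces weakening the target from $\Theta(L^2)$.

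\emph{Step 4 ($\cE=\Theta(L)$ and $T_{mem}=\exp(\Theta(L))$).}
The upper bound $\cE=O(L)$ is constructive: a codeword of physical weight $\Theta(L^2)$, being translation-structured and built from linear checks, can be grown by a walk that fixes one row of qubits at a time, keeping the violated checks inside an $O(1)$-row window --- hence $O(L)$ of them --- at every step. For $\cE=\Omega(L)$ one needs a linear \emph{confinement} bound, $\abs{hc}=\Omega\paren{\min\paren{\abs{c},\operatorname{dist}(c,C_L)}}$ for all $c$ --- any partial logical of linear extent $\ell$ carries $\Omega(\ell)$ violated checks on its boundary --- which I would derive from Step 3 by a cut argument: no codeword being thin, any bisection of a would-be logical severs $\Omega(L)$ checks. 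This is exactly the property that fails ($\cE=\Theta(1)$) for the $m=1$ strips and for welded and glued-surface codes. Finally $\cE=\Theta(L)$ should yield $T_{mem}=\exp(\Theta(\beta L))$ for every fixed $\beta$ above a threshold by an Arrhenius/Peierls argument in the spirit of \cite{bravyi2011analytic}; but a large energy barrier does not by itself imply self-correction, so a rigorous memory-time bound additionally needs that the Gibbs state of this Hamiltonian decompose into $q^{k}$ well-separated phases at fixed temperature, and the energy-barrier argument alone will not deliver this. Step 3 and this last implication are where I expect a complete proof to stall.
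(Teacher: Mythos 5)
The statement you are proving is presented in the paper as an open \emph{conjecture}, not a theorem: the authors give no proof, only heuristic discussion (their Section on $m=1$ upper bounds and on $m\ge2$, together with two further conjectures on isoperimetry and on counting irreducible low-energy words). So you should not expect to close it, and you correctly flag Steps~3--4 as the genuine obstacles. Your algebraic reduction --- working with $M=\ker(h)$ over the Artinian local ring $A=\FF_q[u,v]/(u^L,v^L)$ after restricting $L$ to a $p$-power, and invoking Matlis duality over the Gorenstein ring $A$ to transfer $k$ (and plausibly $d,\cE$) between $h$ and $h^T$ --- is a clean and essentially compatible framework; your ``thin monomial'' obstruction in Step~3 is precisely the Frobenius self-similarity $(x-1)^{p^\ell-1}$ that the paper uses to kill $m=1$, and your confinement bound in Step~4 is morally the paper's Conjecture~\ref{conj:isoperimetric} ($\abs{hc}\ge\lambda\abs{c}^{1/2}$ for finitely supported $c$), which via a word of weight $\Theta(L^2)$ along the walk gives $\cE=\Omega(L)$.

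Two concrete gaps, both of which the paper is explicit about leaving open. First, your Step~4 tries to \emph{derive} the confinement/isoperimetric bound from Step~3 (distance) by a ``cut argument.'' That implication is false in general: a code can have every nonzero codeword of weight $\Theta(L^2)$ while still admitting words far from the code with $O(1)$ syndrome (the welded and glued-surface constructions cited in the paper have large distance and a polynomial energy barrier yet still fail to be self-correcting). The paper accordingly states the isoperimetric inequality as an \emph{independent} conjecture rather than a corollary of distance; you should do the same. Second, even granting $\cE=\Theta(L)$, you correctly say the Arrhenius law is not automatic --- but the paper supplies a specific candidate for the missing ingredient that your proposal omits: Conjecture~\ref{conj:isoperimetric} together with a bound of the form ``at most $L^2 e^{\kappa w}$ irreducible words of energy $\le w$'' (their second conjecture in that section), mirroring the Peierls argument for the 2D Ising model on the Sierpi\'nski carpet. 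Replacing your appeal to ``well-separated phases'' with this counting conjecture would bring your Step~4 into alignment with what the authors actually believe is needed. Finally, your claim that $\gcd(L,q)=1$ forces the minimum-weight codeword onto a single line (weight $\Theta(L)$) is asserted without justification and I don't think it is obviously true for generic $m=1$ bicyclic codes; the restriction to $L=p^\ell$ is better motivated simply by wanting $R_L$ to be local so the $u,v$ calculus applies.
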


This conjecture is expected to be easier to solve than the quantum case.
This could be a good exercise before tackling the quantum version.

We assume the index sets $I, J$ have the same size,
  because we want both $h$ and $h^T$ to have good properties.
It is important for both directions $h$ and $h^T$ to have good properties
  because for quantum codes both directions matters.
If the two index sets $I, J$ have different sizes,
  then one of the two codes will have a larger bits density than the check density.
If we further assume that the checks are independent,
  then the codes will have constant size codewords,
  i.e. constant distance,
  which is not good.
Thus, we assume $I, J$ have the same sizes.

Note that when $m = 1$, this corresponds to the code studied by Yoshida \cite{yoshida2013information}.
However, our new code family is conjectured to be better in the parameters $d$ and $\cE$
  which are only $\Theta(L^{2-\epsilon})$ and $\Theta(\log(L))$ when $m = 1$.
Additionally, the code family with $m = 1$ is not self correcting.

Our new code family is also conjectured to be better than the classical codes considered in
  \cite{lin2023geometrically,baspin2023combinatorial}.
One aspect is that this new family is translation invariant.
The other aspect is that the previous family is not self correcting,
  which follows from the discussion in \Cref{sec:non-construction-recent-codes}.

\subsection{Upper bounds for codes with $m = 1$} \label{sec:m-1}

The difference between $m = 1$ (or $m_1, m_2 = 1$ for quantum codes)
  versus $m \ge 2$ (or $m_1, m_2 \ge 2$ for quantum codes)
  is the main distinction between our proposal and previous constructions.
It is generally believed that
  2D classical codes with $m = 1$ and 3D quantum codes with $m_1, m_2 = 1$
  only have memory time $L^{\Theta(\beta)}$ when $L < L_* = e^{\Theta(\beta)}$
  and do not increase further $L_*^{\Theta(\beta)}$ when $L > L_*$.
We will explain some of the ideas behind this belief in this section.
We will focus the discussion on classical codes for simplicity,
  but the discussion also applies to quantum codes.

When $m = 1$, the 2D classical code is specified by a function $f = \a + \b x + \c y + \d xy$.
We will construct words with the following property.
A similar discussion can be found in the paragraphs below \cite[Def 3.10]{haah2013lattice}.
\begin{claim}
  Given a $m = 1$ 2D translation invariant code
    constructed from $f = \a + \b x + \c y + \d xy$.
  There exist words $c_\ell$ with increasing weight
    such that the syndrome $f c_\ell$ has weight $\le 4$.
  Furthermore, there exist a path to $c_\ell$ with energy $\le \Theta(\log |c_\ell|)$.
\end{claim}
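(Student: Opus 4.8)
\emph{The near-codeword.} The engine is the Frobenius endomorphism. Write $q = p^r$ for the characteristic of $\FF_q$; for any $F \in \FF_q[x,y]$ and any $s \ge 0$ we have $F^{q^s} = F(x^{q^s},y^{q^s})$, since the coefficients of $F$ lie in $\FF_q$. Applying this to $f$, I would take
\[
  c_\ell \;:=\; f^{\,q^{\ell}-1} \ \in\ \FF_q[x,y],
\]
viewed as a finitely supported word on $\ZZ^2$ (on a torus of side $L$, take $\ell = \lfloor\log_q L\rfloor$ so that $c_\ell$ embeds without wrap-around). Its syndrome is
\[
  f\,c_\ell \;=\; f^{\,q^{\ell}} \;=\; f(x^{q^{\ell}},y^{q^{\ell}}) \;=\; \a + \b x^{q^{\ell}} + \c y^{q^{\ell}} + \d x^{q^{\ell}}y^{q^{\ell}},
\]
of weight $|\supp f| \le 4$. (If $f$ is a monomial the code is degenerate — multiplication by $f$ is a bijection, so there is no nonzero codeword — and the statement is vacuous; assume $|\supp f| \ge 2$.) For the weight of $c_\ell$, use $q^{\ell}-1 = (q-1)(1+q+\dots+q^{\ell-1})$ and the fact that Frobenius distributes over products:
\[
  c_\ell \;=\; \prod_{i=0}^{\ell-1} g^{\,q^{i}} \;=\; \prod_{i=0}^{\ell-1} g_i\!\left(x^{q^{i}},y^{q^{i}}\right),\qquad g := f^{q-1},\ \ g_i := g^{(q^{i})},
\]
where $g_i$ is $g$ with its coefficients raised to the $q^{i}$ power. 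Since $\deg_x g,\ \deg_y g \le q-1$, multiplying one monomial out of each factor records the $i$-th factor in the base-$q$ digits of the resulting exponents, so this assignment is injective and there is no cancellation: $|c_\ell| = |g|^{\ell}$. As $f$ is not a monomial, neither is its power $g$, so $|g| \ge 2$ and the weights grow geometrically, giving $\ell = \Theta(\log |c_\ell|)$. (For $q=2$, $f = 1+x+y$ this $c_\ell$ is exactly the Sierpiński gasket $\prod_{i<\ell}(1+x^{2^{i}}+y^{2^{i}})$, with $f c_\ell = 1 + x^{2^{\ell}} + y^{2^{\ell}}$.)

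\emph{The walk.} I would construct a walk $P_\ell$ from $0$ to $c_\ell$ of energy $E_\ell$ by induction on $\ell$, proving $E_\ell \le E_{\ell-1} + |f|\,|g|$ with base case $E_0 = |f|$ (the one-step walk $0 \to c_0 = 1$ has energy $|f\cdot 1| = |f|$); this gives $E_\ell \le |f| + \ell\,|f|\,|g| = \Theta(\log|c_\ell|)$. For the inductive step, factor $c_\ell = c_{\ell-1}\,h$ with $h := g_{\ell-1}(x^{q^{\ell-1}},y^{q^{\ell-1}})$: the monomials $r$ of $h$ have $x$- and $y$-exponents divisible by $q^{\ell-1}$, while $\deg_x c_{\ell-1},\deg_y c_{\ell-1} < q^{\ell-1}$, so $c_\ell = \sum_{r\in\supp h} r\,c_{\ell-1}$ is a \emph{disjoint} union of $|h| = |g|$ translated copies of $c_{\ell-1}$. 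Build the copies one at a time, the copy at $r$ by running the translated walk $r\cdot P_{\ell-1}$ on top of the copies already in place (a legal walk, since multiplication by the monomial $r$ is a bijection on bit positions and the copies are disjoint). At any intermediate word the syndrome is the sum of a finished part $\bigl(\sum_{r \text{ done}} r\bigr) f c_{\ell-1} = \bigl(\sum r\bigr)\,f(x^{q^{\ell-1}},y^{q^{\ell-1}})$, of weight $\le |h|\,|f| = |g|\,|f|$, and an in-progress part $r'\bigl(f\cdot(\text{partial }c_{\ell-1})\bigr)$ of weight $\le E_{\ell-1}$ by the induction hypothesis; hence $E_\ell \le |g|\,|f| + E_{\ell-1}$, and the concatenated walk ends at $\sum_r r\,c_{\ell-1} = h\,c_{\ell-1} = c_\ell$.

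\emph{Where the difficulty lies.} The proof reduces to two "no accidental collision" bookkeeping facts. First, that the $q$-adic separation of scales prevents cancellation in $c_\ell$ and keeps the shifted copies of $c_{\ell-1}$ disjoint — this is pure base-$q$ digit arithmetic and uses only $\deg f \le 1$ in each variable. Second — which I expect to be the fussier part to phrase cleanly and uniformly over all $f$ — controlling the syndrome of a partially built configuration; here the "finished" contribution is a sum of at most $|g|$ shifted copies of the four-term polynomial $f(x^{q^{\ell-1}},y^{q^{\ell-1}})$, whose cancellations can only help, so the crude bound $|g|\,|f| = O(1)$ suffices. Neither step is a genuine obstacle. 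I would close with two remarks: when $|\supp f| = 2$ (so $f$ is, up to a monomial, a binomial $1 + x^{i}y^{j}$) the same $c_\ell$ is an interval-type word admitting a \emph{constant}-energy walk, so the genuinely logarithmic fractal behaviour arises only for $|\supp f| \in \{3,4\}$; and the identical argument, run on the pair $(f,g)$ of \Cref{fig:Haah-code}, yields the analogous low-energy walks for the $m_1 = m_2 = 1$ quantum codes, which is what this section is ultimately after.
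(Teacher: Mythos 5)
Your proof is correct and takes essentially the same route as the paper: the same near-codeword $f^{(\text{Frobenius scale})^\ell - 1}$, the same $q$-adic (the paper uses $p$-adic) digit-separation argument for disjointness of shifted copies and hence for $|c_\ell| = |g|^\ell$, and the same recursive depth-first build of the copies to bound the energy by $O(\ell)$. The only cosmetic difference is that you use the field size $q=p^r$ as the scaling base (so that $F^{q^s} = F(x^{q^s},y^{q^s})$ with the coefficients literally unchanged), whereas the paper uses the characteristic $p$ and notes only that $a^{p^\ell}\neq 0$ iff $a\neq 0$; your factored form $c_\ell = \prod_{i<\ell} g(x^{q^i},y^{q^i})$ is the unrolled version of the paper's recursion $c_{\ell+1} = (f^{p-1})^{p^\ell} c_\ell$, and your inductive walk bound $E_\ell \le E_{\ell-1} + |f|\,|g|$ is the same accounting the paper does by tallying completed DFS subtrees.
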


\begin{proof}
  The construction of these words utilizes the trick
  \begin{equation}
    (x+y)^p = \sum_{i=0}^p {p \choose i} x^i y^{p-i} = x^p + y^p
  \end{equation}
    where $\FF_q$ has characteristic $p$.
  Let $c_\ell = f^{p^\ell - 1}$ be a family of words.
  The corresponding syndrome for $c_\ell$ is $f c_\ell = f^{p^\ell}$.
  Using the trick, we have
  \begin{equation}
    f^{p^\ell}
    = (\a + \b x + \c y + \d xy)^{p^\ell}
    = \a^{p^\ell} + \b^{p^\ell} x^{p^\ell} + \c^{p^\ell} y^{p^\ell} + \d^{p^\ell} x^{p^\ell}y^{p^\ell}.
  \end{equation}
  This means that the syndrome has weight $\le 4$ which can only violates the checks at
    $(0,0), (p^\ell, 0), (0, p^\ell), (p^\ell, p^\ell)$.

  Meanwhile, $c_\ell$ generally has a large support
    due to a self-similar structure.
  Notice that
    \begin{equation}
      c_{\ell+1} = f^{p^{\ell+1} - 1} = \big(f^{p-1}\big)^{p^\ell} f^{p^\ell - 1} = b^{p^\ell} c_\ell
    \end{equation}
    where $b$ is defined as $f^{p-1} = \sum_{i, j=0}^{p-1} b_{i, j} x^i y^j$.
  Following the same argument,
  \begin{equation}
    \big(f^{p-1}\big)^{p^\ell} = \sum_{i, j=0}^{p-1} b_{i, j}^{p^\ell} x^{i p^\ell} y^{j p^\ell}
  \end{equation}
  which is only supported on $(i p^\ell, j p^\ell)$ for $0 \le i, j \le p-1$
    with weight $p^2$.
  This implies that $c_{\ell+1}$ can be built by
    adding $p^2$ copies of $c_\ell$,
    whose $(i p + j)$-th copy is
    multiplied by $b_{i, j}^{p^\ell}$
    and shifted to $(i p^\ell, j p^\ell)$.
  This is the self-similar structure that appears when $m = 1$.

  Note that because $c_\ell$ is supported within $\{(i', j'): 0 \le i', j' \le p^\ell-1\}$,
    these copies do not overlap,
    which make it easy to compute the size of the support.
  In particular,
  \begin{equation}
    |c_{\ell+1}| = A_n |c_\ell|
  \end{equation}
  where $A_n$ is the number of nonzero values in $\{b_{i, j}^{p^\ell}\}_{0 \le i, j \le p-1}$.
  Since $b_{i, j}^{p^\ell}$ is nonzero iff $b_{i, j}$ is nonzero,
    it is clear that $A_n = A_0$.
  By applying the argument iteratively,
    we have
    \begin{equation}
      |c_\ell| = A_0^\ell.
    \end{equation}
  Unless the code is degenerate, where at most one of $\a, \b, \c, \d$ is nonzero,
    $A_0 \ge 2$.
  Hence $c_\ell$ has increasing weight.
  (In fact, the example to have in mind should have at least $3$ nonzero values among $\a, \b, \c, \d$.
    Otherwise, the code has string operators.)

  Finally, we want to show a path to $c_\ell$ with energy $\le \Theta(\ell) = \Theta(\log |c_\ell|)$.
  Because of the self-similar structure,
    $c_\ell$ can be constructed by
    building $A_0$ copies of $c_{\ell-1}$ with some multiplicative factor.
  This leads to a tree structure where each node has $A_0$ subtrees.
  We can then build $c_\ell$ through the order taken by depth first search.
  In another words, we build each $c_{\ell-1}$ in order.
  And to build a $c_{\ell-1}$,
    we build each $c_{\ell-2}$ within in order.
  That means at any given time,
    we are in the process of building some $c_{\ell-1}$,
      which is in the process of building some $c_{\ell-2}$ inside such $c_{\ell-1}$
      and so on.
  This corresponds to a particular leaf.

  We are ready to bound the energy of such path.
  A node whose subtrees have been explored corresponds to a completed $c_{\ell'}$
    which only contributes at most $4$ units of energy.
  At any given moment,
    the explored leaf can be covered $\le \ell A_0$ completed nodes.
  That means the energy of the path is at most $4 \ell A_0 = \Theta(\ell) = \Theta(\log |c_\ell|)$.
\end{proof}

The claim suggests that there are large words of diameters $\Theta(L)$
  (because $c_\ell$ has large weight)
  that appear after a short period $e^{\beta \Theta(\log L)}$
  (because $c_\ell$ can be reached with low activation energy).
Indeed, these fractal patterns can be observed in the Monte Carlo simulations of Glauber dynamics.
As time progresses,
  these fractal patterns obscure the codewords and destroys the encoded information.
While some of these statements lack rigorous proof,
  they likely represent a correct physical picture of the underlying thermodynamic process.

\subsection{Discussions on $m \ge 2$} \label{sec:m-2}

As we saw in the last section,
  codes with $m = 1$ have large patterns with small syndrome that destroy the encoded information.
We conjecture that, for large enough $m \ge 2$,
  there is no longer large patterns with small syndrome.
In fact, we conjecture something stronger.
\begin{conjecture} \label{conj:isoperimetric}
  There exists a 2D translation invariant code $h$ and a constant $\lambda > 0$,
    such that for any word $c$ with finite support
    $|hc| \ge \lambda |c|^{1/2}$.
\end{conjecture}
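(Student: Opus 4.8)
The inequality $|hc|\ge\lambda|c|^{1/2}$ is equivalent, after squaring, to the statement: for every finitely supported word $c$, writing $s:=|hc|$, one has $|c|\le\lambda^{-2}s^{2}$. So the goal is to choose $h\in\Mat(m,m,R)$ with $R=\FF_q[x^{\pm1},y^{\pm1}]$ and then bound the support of $c$ quadratically in its syndrome weight. Note that $\det h\ne 0$ is forced: a singular $h$ has a nonzero Laurent-polynomial vector $v$ in its kernel, and then $g\cdot v$ is a finitely supported word of unbounded weight with zero syndrome for every $g\in R$. (If proved, this bound is also what would yield $\cE=\Theta(L)$ and $d=\Theta(L^{2})$ in the companion conjecture.)

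\textbf{Reduction to one polynomial.}
Multiplying $hc=w$ by the adjugate gives $\det(h)\,c=\operatorname{adj}(h)\,w$; since the entries of $\operatorname{adj}(h)$ are fixed Laurent polynomials with boundedly many terms, $|\det(h)\,c|\le C_{1}s$ for a constant $C_{1}=C_{1}(h)$. Hence it suffices to produce a single Laurent polynomial $P=\det h$ --- a priori an arbitrary polynomial supported in an $O(m)\times O(m)$ box with coefficients we may choose over a large field --- such that $|Pc|\ge\lambda'|c|^{1/2}$ for all finitely supported $c$. I expect this single-polynomial form to be the cleanest route; one should bear in mind, though, that the matrix structure with $m\ge2$ may be strictly more powerful, and if the polynomial route stalls one should instead exploit the square-complex (product-of-bipartite-graphs) structure directly.

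\textbf{What must be avoided, and the role of large $q$.}
The exponent $1/2$ is precisely the isoperimetry of $\ZZ^{2}$ (boundary $\sim\sqrt{\text{area}}$), which is trivial when $h$ acts injectively site by site; the content of the conjecture is that a good $h$, despite having large kernel on finite tori, creates no algebraic shortcut beating this. The $m=1$ obstruction of \Cref{sec:m-1} is exactly such a shortcut: the Frobenius identity (raising to a power of the characteristic distributes over a sum of monomials) lets a divisor of $P$ tile a self-similar word of weight $\to\infty$ with $O(1)$ syndrome, and any $P$ supported on a $2\times2$ square admits such a factorization. Enlarging the support and choosing coefficients generically over a large $\FF_q$ should destroy every Lucas-type factorization and, more importantly, every approximate self-similarity. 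The plan is: (i) show that any $c$ with $|c|\gg s^{2}$ forces an ``algebraic core'' of bounded size, namely a genuine or approximate self-similar tiling of $\supp c$ by a bounded pattern propagated under multiplication by $P$; (ii) observe each such core imposes a nontrivial polynomial condition on the $O(m^{2})$ coefficients of $P$, and there are only finitely many possible cores up to translation and scaling; (iii) conclude a generic $P$ over $\FF_q$ with $q$ large avoids all of them at once --- the regime the paper singles out.

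\textbf{The core estimate, and the main obstacle.}
Step (i) is the crux: one must show $|Pc|=s$ forces $\supp c$ into a region of area $O(s^{2})$, hence $|c|=O(s^{2})$. I would attempt a sweep-line / transfer-operator analysis: ordering $\ZZ^{2}$ by columns, $Pc=w$ becomes a linear recurrence determining column $i+1$ of $c$ from the earlier columns and from $w$, driven by a transfer operator essentially given by multiplication by a rational function of $y$; a finitely supported solution is a finite sum of images of the $s$ syndrome entries under powers of this operator, and one needs these images to neither persist past $O(s)$ columns nor spread past $O(s)$ rows before cancelling (both failures being string-like or self-similar behavior excluded by the choice of $P$). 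An alternative is an erosion argument: peel the boundary layer of $\supp c$, on whose complement all checks hold; show the interior is determined by a bounded-width annulus around it; iterate to force thinness unless a periodic or self-similar pattern arises. I expect step (i) to be where the real difficulty lies, and the reason this remains a conjecture: ``fractal word'' is not a single clean obstruction --- beyond the explicit Frobenius identities there may be subtler approximate self-similarities still depressing the syndrome --- and one must classify them all, secure the exponent exactly $1/2$ rather than a weaker $|c|^{\epsilon}$ or $\log|c|$, and do so uniformly enough that one generic $h$ defeats the entire family of bad words. A sensible warm-up is to fix small $m$ and $q$, search by computer for a candidate $h$, and verify $|hc|\ge\lambda|c|^{1/2}$ for all words up to moderate size, both for confidence and to reveal which coefficient conditions are operative.
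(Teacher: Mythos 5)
This statement is labeled a \emph{conjecture} in the paper and carries no proof; the paper's own discussion of it (\Cref{sec:m-2}) is heuristic. Your write-up is honest about the same thing, so the comparison is really between your sketched plan and the paper's informal reasoning. Several of your observations are on target and match the paper's viewpoint: the necessity of $\det h\ne 0$, the identification of $1/2$ as the $\ZZ^{2}$ isoperimetric exponent, and the framing of the difficulty as ruling out (approximate) self-similar patterns.

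The fatal problem is your central reduction. You propose to replace the matrix bound by the single-polynomial bound for $P=\det h$, reasoning that a generic $P$ supported on a large box over a large $\FF_q$ escapes the Frobenius obstruction. But the Freshman's-dream identity holds for \emph{every} Laurent polynomial over a field of characteristic $p$, not just those supported on a $2\times 2$ box: if $P=\sum_{i}a_{i}m_{i}$ (distinct monomials $m_i$), then $P^{p^{\ell}}=\sum_{i}a_{i}^{p^{\ell}}m_{i}^{p^{\ell}}$, so $|P^{p^{\ell}}|\le|P|=O(1)$ while $c_{\ell}:=P^{p^{\ell}-1}$ has unbounded weight whenever $P$ is not a monomial (indeed this is exactly the argument in the Claim of \Cref{sec:m-1}, and if $\det h$ were a monomial the code would be trivial). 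Enlarging the support and choosing coefficients generically changes nothing, since the identity uses only the characteristic. So the sub-goal $|Pc|\ge\lambda'|c|^{1/2}$ you reduce to is \emph{provably false for every admissible $P$}, and the reduction, while logically sound, leads into a dead end. This is precisely what the paper means by the $m=1$ versus $m\ge 2$ dichotomy: the adjugate step collapses the matrix code back to a single polynomial, which brings back the obstruction the $m\ge 2$ structure is designed to evade. (The word $c_{\ell}=P^{p^{\ell}-1}$ placed in one coordinate does \emph{not} have small syndrome under $h$ itself --- only under $\det h$ --- so your reduction does not disprove the conjecture; it just cannot prove it.) Your hedge that ``the matrix structure with $m\ge 2$ may be strictly more powerful'' is not a caveat --- it is the whole content. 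Steps (i)--(iii) of your plan would need to be carried out directly on $h$, tracking how the $m$ coupled components interact, and the sweep-line/erosion ideas you raise are a reasonable starting point for that; the determinant reduction should be discarded.
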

This bound is analogous to the isoperimetric inequality in $\RR^2$,
  where there exists a constant $\lambda$,
  such that $|\partial A| \ge \lambda |A|^{1/2}$
  for any compact region $A \subset \RR^2$,
  where $\partial A$ is the boundary of $A$.
This is the key property that implies 2D Ising model is self correcting.

In the proof for 2D Ising model,
  the above isoperimetric property is then used to bound the number of low energy states.
\begin{conjecture}
  There exists a 2D translation invariant code $h$ and a constant $\kappa > 0$,
    such that for any $w$ and $L$,
    the number of irreducible words supported in $[0,L] \times [0,L]$
    with energy $\le w$,
    is $\le L^2 e^{\kappa w}$.
\end{conjecture}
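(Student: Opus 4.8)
The plan is to derive the counting bound from the conjectured isoperimetric inequality $|hc| \ge \lambda |c|^{1/2}$ via a Peierls-type contour/encoding argument, mirroring the classical proof that the 2D Ising model has only exponentially many low-energy excitations at a given energy. Fix the code $h$ and constant $\lambda$ furnished by \Cref{conj:isoperimetric}. First I would make precise the notion of \emph{irreducible word}: a word $c$ is irreducible if it cannot be written as $c = c' + c''$ with disjoint supports where each of $c', c''$ is itself a nontrivial word with $\supp(hc') \cap \supp(hc'') = \emptyset$ and $|hc'|, |hc''| < |hc|$ (i.e.\ the syndrome does not split into two non-interacting pieces). The point of irreducibility is connectivity: if $c$ is irreducible then, because $h$ is LDPC and translation-invariant (so each check touches $O(1)$ cells and each cell touches $O(1)$ checks), the set $\supp(c) \cup \supp(hc)$ must be connected in the bounded-range interaction graph on $[0,L]^2$ — otherwise one could split $c$ along the gap between components, contradicting irreducibility. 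Let $\rho = O(1)$ be the interaction range.

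Given this, the strategy is an encoding/compression argument. Let $c$ be an irreducible word supported in $[0,L]^2$ with $|hc| \le w$. By the isoperimetric conjecture, $|c| \le (|hc|/\lambda)^2 \le w^2/\lambda^2$; in particular $c$ is a subset of an $[0,L]^2$-region of at most $w^2/\lambda^2$ cells, but that bound alone gives $\binom{L^2}{w^2/\lambda^2}$ which is too large. Instead I would exploit connectivity directly: $S := \supp(c) \cup \supp(hc)$ is a connected subset of the bounded-degree graph $G$ on the cells of $[0,L]^2$, of size at most $(1 + O(1)) w^2/\lambda^2 = O(w^2)$. The standard fact is that the number of connected subgraphs of a bounded-degree graph that contain a \emph{fixed} vertex and have size $\le s$ is at most $C^{s}$ for a constant $C$ depending only on the max degree (by a spanning-tree / Euler-tour encoding). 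Here the "fixed vertex" costs a factor $L^2$ (choose the lexicographically-least cell of $S$), so the number of possible sets $S$ is $\le L^2 \cdot C^{O(w^2)}$. Finally, once $S$ is fixed, the word $c$ is determined by its values on $\supp(c) \subseteq S$, and over $\FF_q$ that is at most $q^{|S|} = q^{O(w^2)}$ choices. Multiplying, the count is $\le L^2 e^{\kappa w}$ — wait: this gives $e^{\kappa w^2}$, not $e^{\kappa w}$.

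This gap is exactly where the main obstacle lies, and it must be closed by a genuinely two-dimensional (surface, not volume) counting argument: one should not enumerate all of $S$ but only its \emph{boundary/contour}. The right picture: the syndrome $hc$ of weight $\le w$ should be thought of as a collection of "defect clusters" whose total size is $\le w$; for an irreducible word these clusters are linked, and $c$ should be recoverable from the syndrome up to the local kernel of $h$. Concretely, I would try to show that for the good code $h$ of \Cref{conj:isoperimetric} there is also a \emph{local} decoding/recovery statement — any finitely supported $c$ with $|\supp(c)| = O(|hc|)$ is the unique minimal-support preimage of $hc$ in a suitable sense — so that $c \mapsto hc$ is injective on irreducible words up to an $O(1)$ ambiguity per cell. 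Then counting reduces to counting connected syndrome configurations of weight $\le w$ anchored at one of $L^2$ positions, which is $L^2 \cdot C^{w} \cdot q^{O(w)} = L^2 e^{\kappa w}$. The hard part, and the reason this is stated as a conjecture rather than a lemma, is establishing this local invertibility of $h$ simultaneously with the isoperimetric bound: it is essentially the statement that the code has no small-syndrome large-support patterns \emph{and} no small-support elements of $\ker h$ that could be glued in freely, which is precisely the $m \ge 2$ phenomenon the paper is conjecturing and cannot yet prove. Absent a proof of local invertibility, the argument above still yields the weaker (and possibly all-one-can-hope-for unconditionally) bound $L^2 e^{\kappa w^2}$, and one would then need to check whether that weaker bound already suffices for the self-correction argument in the regime $w = O(\log L)$, where $e^{\kappa w^2} = L^{O(\log L)}$ is sub-polynomial only marginally — suggesting the linear-in-$w$ bound is really needed and the local-invertibility step is unavoidable.
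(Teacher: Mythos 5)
The statement you were asked to prove is a \emph{conjecture} in the paper; the authors give no proof, only heuristic motivation (the $L^2$ factor coming from translation invariance, the Ising-model analogy, and the informal partition-function factorization $\sum_c e^{-\beta\epsilon(c)} \approx \prod_{c_{\mathrm{irr}}}(1+e^{-\beta\epsilon(c_{\mathrm{irr}})})$ over irreducible components). So there is no proof in the paper to compare against, and you are right to frame your attempt as an analysis of a strategy and its obstruction rather than a completed argument.

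That said, your diagnosis of the obstruction is incomplete. Your Peierls-type route is the natural one, and you are correct that \Cref{conj:isoperimetric} plus connected-animal counting only yields $L^2 e^{\kappa w^2}$: irreducibility gives connectivity of $\supp(c)\cup\supp(hc)$ in the bounded-range interaction graph, the isoperimetric bound gives $|c|\le (w/\lambda)^2$, and lattice-animal counting on that volume is the dominant cost. But your proposed fix --- local invertibility of $h$, so that $c$ is recoverable from $hc$ --- is necessary yet not by itself sufficient to reach $e^{\kappa w}$. Even if $hc$ determines $c$, you must still \emph{encode} $hc$ in $O(w)$ bits. In the 2D Ising model this works because the syndrome is a closed contour: a connected object of length $w$, hence of diameter $O(w)$, enumerable with $C^w$ anchored walks. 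For a general translation-invariant code, irreducibility (in the paper's sense: $\supp(hc')$ and $\supp(hc'')$ disjoint) does \emph{not} force the syndrome support to be connected --- the defect clusters can be mutually far apart and ``linked'' only through $\supp(c)$, which has size and diameter $O(w^2)$. Splitting the syndrome into its far-apart clusters need not lift to a splitting $c=c'+c''$ because the individual cluster syndromes need not lie in $\operatorname{im}(h)$. Specifying which $\le w$ cells within an $O(w^2)$-sized connected animal carry the syndrome then costs $\binom{O(w^2)}{w} = e^{O(w\log w)}$, still super-exponential in $w$. Closing this requires a further, unproven structural property --- roughly, that irreducible words of a good code have syndrome support of diameter $O(w)$ --- which is a strictly stronger cousin of \Cref{conj:isoperimetric} and is presumably why the paper poses this counting bound as an independent conjecture rather than deriving it.
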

We say a word $c$ is reducible if
  there exist $c', c''$
  such that $c = c' + c''$
  and the support of $f c'$ and the support of $f c''$ are disjoint.
Otherwise, we say $c$ is irreducible.

The $L^2$ factor comes from the fact that the code is translation invariant,
  which means if a word has energy $w$,
  then a translation of the word also has energy $w$.
Since the system has roughly $L^2$ translation directions,
  this leads to the $L^2$ factor.

One reason to study irreducible words is that
  this is the way the analysis is done
  in the proof regarding the thermal stability of the 2D Ising model.
A more physical reason is that each of these irreducible words
  are evolving independently
  and the analysis should know this feature.
In contrast, if we study the low energy states directly,
  then the bound on the number of low energy state
  do not capture the properties of the irreducible words,
  since there are significantly more reducible states than irreducible states.
For example, there are $\Theta(L^2)$ words of weight $1$,
  and there are $\Theta(L^4)$ words of weight $2$,
  most of which come from reducible words supported on two distant points.
More generally, since a word of weight $w' = w/4$ has energy at most $w$,
  there are at least $\Omega(L^{2w'}) = \Omega(L^{w/2})$ low energy words
  which is much larger than the number of irreducible words $L^2 e^{\kappa w}$.

We provide an example that sketches how the bound on the number of irreducible words is used.
Consider an extreme scenario where there are two disjoint regions $A_1, A_2$ in the code
  (this do not happen for generic 2D translation invariant codes).
It implies every word $c$ can be separated into a component on $A_1$, $c_1$, and another component on $A_2$, $c_2$,
  such that $c = c_1 + c_2$.
In this case, the partition function $\sum_{c} e^{-\beta c}$,
  can be regrouped as $\sum_{c} e^{-\beta c} = (\sum_{c_1} e^{-\beta c_1})(\sum_{c_2} e^{-\beta c_2})$.
A similar trick can be applied to the general case
  which separates the reducible words into irreducible components.
Roughly speaking, albeit being incorrect, we have
  $\sum_{c} e^{-\beta c} = \prod_{c_{irr}} (1 + e^{-\beta c_{irr}})$
  where $c_{irr}$ ranges over all irreducible components.
These steps are hidden in the proof regarding the thermal stability of 2D Ising model.

One might wonder why we remain optimistic that these conjectures can be satisfied
  by some 2D translation invariant code,
  even though they are not satisfied by $m = 1$ codes.
We do not have direct evidence.
Instead, we resort to analogy.

In the context of cellular automaton and periodic tiling,
  it is known that different instances have drastically different behaviors.
The simplest constructions often display simple patterns,
  such as periodic patterns.
More complex constructions may yield fractal patterns.
But the most complicated instances are known to be Turing complete.

We suggest that a similar dichotomy may exist for translation invariant codes.
It seems possible that by going to $m \ge 2$,
  we can circumvent the undesired fractal patterns.
At the very least, the construction of the fractal patterns described in \Cref{sec:m-1}
  no longer works.
More generally, we might hope that the allowed patterns are sufficiently generic,
  so that the condition in \Cref{conj:isoperimetric} holds.
We believe that most instances with large $m$ and large field size $q$
  should be self-correcting.
In particular, self-correcting codes are abundant in this regime of large $m$ and $q$.

We tried to demonstrate the existence of self-correcting codes
  by numerically running Monte Carlo and estimate the memory time for the classical 2D codes.
One signature for this dichotomy is that $T_{mem}$ scales polynomially versus exponentially with $L$.
However, the memory times are quite long,
  and we are only able to measure the memory time for small system sizes $L = 3, 4, 5$, and barely $6$.
Therefore, we are unable to confidently distinguish between these two scalings over $L$.

To summarize, before the paper by Haah and Yoshida,
  our understanding of codes with translation symmetry was limited to TQFTs,
    whose excitations are string-like or surface-like.
The papers by Haah and Yoshida,
  introduced a new regime, where the excitations are fractal-like.
In this paper, we suggest the existence of yet another regime,
  where the excitations are beyond strings or fractals,
  and potentially beyond simple classifications.

\section{Construction 2: based on fractals}
\label{sec:construction-fractal}

Even though we believe that Construction 1 yield an instance of self-correcting quantum code
  and may be more practical to physically realize,
  it may be difficult to prove its properties mathematically.
So we present an alternative route through a different construction.
This construction has worse parameters than Construction 1.
However, it does not need to be translation invariant,
  which provides more flexibility to the construction.

Construction 2 is based on the hypergraph product of two classical codes.
The two classical codes are supported on a Sierpiński carpet of Hausdorff dimension $1+\epsilon$
  and are local respect to the geometry of the Sierpiński carpet.
This means the quantum code is supported on a product of two Sierpiński carpets,
  which can be embedded in $\RR^3$ (will be described in another paper).
This constitutes the construction of a 3D local quantum code.
The missing piece is to show that the quantum code is self correcting.
We will discuss each component in more detail.

\subsection{Review of fractals and prefractals}

Fractals are point sets with self similar structures.
Some common examples include Cantor set, Sierpiński carpet, and Menger sponge.
We will focus our discussion to Sierpiński carpets.
Fractals are important in our context because they have fractional Hausdorff dimensions
  which allows us to explore regimes that was not possible with integer dimensions.

The Sierpiński carpet is a fractal subset of $[0,1]^2$ with self similar structures.
It is defined by cutting a square into $9$ subsquares
  and removing the center square.
The process is then applied recursively to the remaining $8$ subsquares.
The limiting set has Hausdorff dimension $\frac{\log 8}{\log 3} \approx 1.893$.

For the purpose of embedding a product of two fractals into $\RR^3$
  we want the fractals to have Hausdorff dimension $1 + \epsilon$ for some small $\epsilon > 0$.
This can be achieved by cutting the square into more subsquares.
Given an integer $A > 0$.
Let $C_0 = [0,1]^2$, which consists of one square.
Given $C_i$, which consists of several squares,
  we cut each square into $A \times A$ subsquares,
  and remove the center $(A-2) \times (A-2)$ subsquares.
The remaining squares form the region $C_{i+1}$,
  and this process is done recursively.
The limiting set is formally defined as the intersection $C = \bigcap_{i=0}^\infty C_i$
  which has Hausdorff dimension approaching $1$ as $A$ approaches infinity.

A related concept of fractals is prefractals
  which are the intermediate sets that appear during the iteration process.
For example, the sets $C_0, C_1, ...$ described above.
In a formal discussion, we will use prefractals instead of fractals,
  because while our code mimics a fractal,
  it has to be a finite object.
Therefore, we must terminate the process after a finite number of iterations.

Additionally, we rescale the sets so that each square has a unit size.
This rescaling ensures a constant qubit density,
  as we will later place a fixed number of qubits in each square.
In particular, let $B_i = A^i C_i$,
  which scales up $C_i$ by a factor of $A^i$.

\subsection{Classical code supported on a prefractal}

The classical code needed for our construction are required to have very good properties
  that are not known to exist yet.
These are similar to the isoperimetric inequiality described in \Cref{conj:isoperimetric}.

\begin{conjecture} \label{conj:isoperimetric-2}
  For any $A > 0$,
    there exist a classical code with the parity check matrix $H: \FF_q^n \to \FF_q^{n'}$
      and a local embedding $[n] \sqcup [n'] \to B_i$
    such that both $H: \FF_q^n \to \FF_q^{n'}$ and $H^T: \FF_q^{n'} \to \FF_q^n$ satisfy
    $|H c| \ge \lambda |c|^\nu$ and $|H^T c'| \ge \lambda |c'|^\nu$
    for any $c \in \FF_q^n, c' \in \FF_q^{n'}$ with $|c|, |c'| \le n^\mu$
    (note that $n' = \Theta(n)$ because of the local embedding),
    for some constant $\lambda, \mu, \nu > 0$.
\end{conjecture}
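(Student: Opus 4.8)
\emph{Proof proposal.} The plan is to build $H$ recursively, in lockstep with the self-similar construction of $B_i$, and to propagate a two-sided isoperimetric estimate as an invariant of the recursion. As a preliminary lemma I would first establish the \emph{bare} isoperimetric inequality for the prefractal itself: writing $m := 4A-4$ for the number of surviving sub-cells and $1+\epsilon$ for the Hausdorff dimension (so $\epsilon = \log m/\log A - 1 \to 0$ as $A \to \infty$), any set $S$ of unit cells of $B_i$ with $|S| \le \tfrac12|B_i|$ satisfies $|\partial S| \ge \lambda_0 |S|^{\epsilon/(1+\epsilon)}$. This is the usual self-similar induction --- cut $S$ along the $m$ sub-prefractals; either one sub-prefractal carries most of $S$ and one recurses, or $S$ is spread over several sub-prefractals and the graph linking them supplies the boundary --- and it fixes the target exponent $\nu \approx \epsilon/(1+\epsilon) = \Theta(\epsilon)$.

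For the code, start from a ``seed'' code $H^{(i_0)}$ on a small prefractal $B_{i_0}$ with good two-sided distance and small-set expansion over a large field $\FF_q$; this is a constant-size object, obtainable by exhaustive search or adapted from a known short LDPC code. Inductively, form $H^{(i+1)}$ by placing one copy of $H^{(i)}$ on each of the $m$ sub-prefractals of $B_{i+1}$ and adding a bounded number of \emph{interface checks} along each pair of shared sub-prefractal boundaries (plus a few interface qubits if needed). These interface checks must be designed so that (i) the code stays LDPC and geometrically local for the embedding into $B_{i+1}$; (ii) the construction is symmetric enough under $H \leftrightarrow H^T$ that every step applies verbatim to $H^T$; and (iii) any word $c$ of small weight and small syndrome is forced, sub-block by sub-block, to restrict to a genuine level-$i$ codeword on each active sub-prefractal, or else pay an interface violation. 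Granting (iii), one runs the scale-balancing argument behind the thermal stability of the Sierpiński-carpet Ising model \cite{vezzani2003spontaneous}: at the scale $\ell$ that balances ``$c$ concentrated in a few large sub-blocks'' against ``$c$ spread over many small sub-blocks'', let $U$ be the set of active level-$\ell$ sub-blocks; the interface checks between active and inactive sub-blocks contribute $\gtrsim |\partial U| \gtrsim |U|^{\epsilon/(1+\epsilon)}$ to $|Hc|$ by the preliminary lemma, while the induction hypothesis inside each active sub-block, through (iii), contributes $\gtrsim (\text{sub-block size})^\nu$; optimizing over $\ell$ gives $|Hc| \ge \lambda|c|^\nu$ for all $|c| \le n^\mu$, with $\mu < 1$ determined by where finite-size effects of $B_i$ take over. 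The bound for $H^T$ then follows from (ii).

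The main obstacle is property (iii): gluing copies of a good code does \emph{not} by itself preserve distance or small-set expansion --- a word supported inside a single sub-block can be globally small --- so the interface checks must genuinely witness that a sub-block fails to carry a codeword. This is exactly the robustness / agreement-testing difficulty at the heart of locally testable and good quantum LDPC code constructions, here with the extra and genuinely binding constraint that it hold two-sidedly: one cannot simultaneously make the qubit-degree exceed the check-weight (which would already give $|Hc| \gtrsim |c|$ by a random-code union bound) and the reverse for $H^T$, so robustness for both directions must come from the fractal geometry rather than from local over-determination. A secondary difficulty is preventing $\nu$ from decaying across the iteration: this is plausible only because the fractal's isoperimetric exponent is scale-invariant, so the inductive estimate can in principle be arranged to have it as a fixed point, but making the recursion quantitatively self-sustaining --- and pinning down the admissible $\mu$ --- is delicate.
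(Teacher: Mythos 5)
The statement you are trying to prove is labeled as a conjecture in the paper and is explicitly acknowledged there as open: the authors write that the required classical codes ``are required to have very good properties that are not known to exist yet,'' and they offer only a heuristic (not a proof) for why such codes might exist. So there is no paper proof to compare against; your write-up should be judged as a research plan.

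Your plan also takes a genuinely different route from the one the paper sketches. You propose a deterministic, recursive gluing: place $m = 4A-4$ copies of a level-$i$ code on the sub-prefractals of $B_{i+1}$, add interface checks along shared boundaries, and propagate the two-sided isoperimetric bound as an inductive invariant via a cross-scale balancing argument, with the preliminary bare isoperimetric lemma fixing the exponent $\nu \approx \epsilon/(1+\epsilon)$. The paper instead suggests a \emph{random} construction --- place bits and checks on $B_i$ and connect them randomly --- made rigorous by a Lov\'asz-local-lemma argument to eliminate the degenerate local patterns that a purely random wiring inevitably produces at large scale. Your route exposes the fractal geometry more transparently; the paper's route avoids designing interfaces but moves the difficulty into a probabilistic small-set-expansion argument. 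Both are open.

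On the gap you identify, you are right that it is the crux, but your property (iii) as stated is not strong enough to close the induction, and the reason is worth spelling out. Suppose $c$ is a nonzero codeword of $H^{(i)}$ placed in a single sub-block of $B_{i+1}$ and zero elsewhere. Then $c$ restricts to a ``genuine level-$i$ codeword'' on its sub-block and to zero on all others, so it satisfies property (iii) vacuously, yet $H^{(i+1)} c$ receives no contribution from the interior checks (they are satisfied) and possibly none from the interface checks either (if $c$ avoids the sub-block boundary). The inductive hypothesis only gives $|c| \ge n_i^\mu$, but you need $|c| \ge n_{i+1}^\mu = (m n_i)^\mu$, i.e.\ the minimum distance must grow by a factor $\ge m^\mu$ per level, and gluing plus boundary checks does not do that by default. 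So the interface checks cannot merely distinguish codewords from non-codewords locally; they must be chosen so that every nonzero level-$i$ codeword in a sub-block triggers an interface violation (equivalently, the local codes must have no codewords supported away from the sub-block boundary), and this must hold simultaneously for $H$ and $H^T$, which rules out the easy one-sided over-determination trick. Making this distance amplification quantitative and two-sided on a fractal of dimension $1+\epsilon$ is precisely the open content of the conjecture, and it is not resolved by the balancing argument alone.
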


Because we describe the system as a finite size object,
  we additionally need the small set condition, $|c|, |c'| \le n^\mu$.
In words, the conjecture states
  that $H$ and $H^T$ have small set isoperimetric inequality.

It is plausible that such code exists
  through random constructions.
One way to generate a random code on the scaled prefractal $B_i$
  is by placing bits and checks on $B_i$ randomly
  and connect them randomly.
As we forgo translation invariance,
  the choice of interaction in each local region is independent from the rest.
This provides more flexibility in construction and proof techniques
  which may be sufficient to establish the isoperimetric inequality.\footnote{
    The simple construction described here is not the full story.
    In fact, if the connections are done randomly,
      there will always be degenerate structures,
      when the system is large enough.
    Therefore, we need to pass the random construction through the Lovász local lemma,
      to guarantee there is no bad events.
  }

On the other hand, we may view this problem through the language of linear circuits.
For example, a bit and a check can be interpreted as a gate that maps $(x, y)$ to $(x, x+y)$. See \Cref{fig:cnot}.
Therefore, the question becomes a question about building robust circuits that is supported on the prefractal $B_i$.

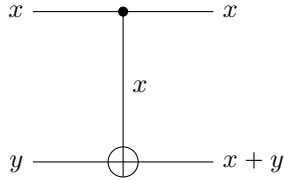
\begin{figure}[H]
  \centering
  \begin{tikzpicture}[scale=2]
    \draw (-0.6,1)node[anchor=east]{$x$} -- (0.6,1)node[anchor=west]{$x$};
    \draw (-0.6,0)node[anchor=east]{$\phantom{x+}y$} -- (0.6,0)node[anchor=west]{$x+y$};
    \draw (0,-0.1) -- (0,1);
    \draw (0,0.5)node[anchor=west]{$x$};
    \draw (0,0) circle (0.1cm);
    \filldraw (0,1) circle (0.03cm);
  \end{tikzpicture}
  \caption{The figure illustrates an example where we interpret a linear code as a linear circuit.}
  \label{fig:cnot}
\end{figure}

\subsection{The candidate quantum code}

As we saw, even the classical code problem is already unknown.
Nevertheless, we will hint on how the classical problem can help us construct self-correcting quantum codes.
The quantum code we consider is the hypergraph product of two classical codes supported on a prefractal $B_i$.
That means the quantum code is local on the Cartesian product $B_i \times B_i$.
This product naturally embeds in $\RR^4$.
But, in fact it can be embedded in $\RR^3$ with an appropriate choice of $A$.

\begin{theorem}
  For large enough $A$, $B_i \times B_i$ can be embedded in $\RR^3$.
\end{theorem}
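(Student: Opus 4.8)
The plan is to embed the prefractal $B_i$ into $\RR^{3/2}$ in a suitable metric sense, so that the Cartesian product $B_i \times B_i$ lands in $\RR^3$. More concretely, the key observation is that $B_i$ is a bounded-degree, low-dimensional object: it is built from $O(A^i)$ unit squares arranged according to the prefractal rule, and its Hausdorff (and box-counting) dimension approaches $1$ as $A \to \infty$. First I would make precise what ``embedded in $\RR^3$'' should mean here — namely a bi-Lipschitz (or at least coarsely bi-Lipschitz, with constants independent of $i$) embedding of the unit-square adjacency graph of $B_i \times B_i$ into $\RR^3$ with bounded vertex density. This is the property actually needed downstream for geometric locality, so it is the right target.

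The main steps would be: (1) Show that $B_i$, rescaled so that squares have unit size, admits a coarse embedding into a $(1+\epsilon')$-dimensional space in the sense that it can be laid out along a thickened curve — intuitively, traverse the $8^i$-ish surviving squares in a space-filling-like order (a Hilbert/Peano-type traversal adapted to the prefractal) so that consecutive squares in the order are geometrically close, and the whole layout fits in a box of side $A^{i(1/2+o(1))}$ in each of a small number of coordinates. The point is that because the fractal dimension is just above $1$, the $i$-th prefractal, with $A^i$ linear scale and $\approx (A-2)^{2i}$ wait, $(A^2-(A-2)^2)^i$ squares, is ``thin'': its number of squares grows like (linear size)$^{1+\epsilon(A)}$ with $\epsilon(A)\to 0$. (2) From an embedding of $B_i$ into $\RR^{d}$ with $d$ close to $1$ (say $d \le 3/2$ achievable once $A$ is large enough, via a careful recursive layout), take the product embedding of $B_i \times B_i$ into $\RR^{2d} \le \RR^3$, checking that adjacency in the product graph (which changes one coordinate-square at a time) maps to bounded Euclidean distance and that density stays bounded. (3) Handle the fact that $\RR^{3/2}$ is not literally a thing by phrasing everything in terms of embeddings of prefractals of bounded recursion depth into $\RR^2$ whose image has diameter $A^{i(3/4)}$-ish, so the product sits in $\RR^4$ but collapses to $\RR^3$ because two of the four coordinates jointly span only $o$ of a full dimension; equivalently, iterate: embed $B_i \subset B_{i'} \times (\text{small-depth fractal})$ and recurse.

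The honest statement of the argument is a packing/recursion estimate: if $B_i$ can be embedded in a box $Q_i \subset \RR^2$ with $\mathrm{side}(Q_i) = A^{\alpha i}$ for some $\alpha < 1$ (with bounded square-density), then $B_i \times B_i$ embeds in $Q_i \times Q_i \subset \RR^4$, and one further spends the slack: since $\alpha < 1$, one of the four coordinate-ranges can be folded into the others at the cost of a constant density increase, giving $\RR^3$. I would set this up recursively in $i$: $B_{i+1}$ is made of $\le A^2$ scaled copies of $B_i$ placed in a frame of size $A \cdot \mathrm{side}(Q_i)$, and one packs those $\le A^2$ copies into a region of size $\mathrm{side}(Q_i) \cdot A^{2/3}$ wait — into a region that is only $A^{2/3}$-ish larger in linear scale by using the third dimension, which is where large $A$ is spent. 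Tracking the recursion, $\mathrm{side}(Q_i)$ grows like $A^{(2/3)i}$ (say) rather than $A^i$, and then the two-fold product needs $A^{(4/3)i} \le (A^i)^2$ room, fitting in three spatial dimensions of linear extent $A^{(2/3)i}$, matching $n = A^{\Theta(i)}$ qubits in a $\Theta(A^{(2/3)i})^3 = A^{2i}$-volume region — consistent once $\epsilon(A)$ is small, i.e., $A$ large.

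\textbf{Main obstacle.} The hard part will be step (1): producing an explicit recursive layout of the prefractal that simultaneously (a) keeps geometrically-adjacent squares within $O(1)$ distance (so the embedding is Lipschitz), (b) keeps the image within a box of linear size $A^{\alpha i}$ with $\alpha$ bounded away from $1$ and tending to a small constant as $A\to\infty$, and (c) keeps the point density bounded everywhere, with all constants uniform in $i$. Space-filling-curve constructions give (a) and (c) but typically fill a full-dimensional box, losing the dimension savings; conversely, naive ``stack the copies'' layouts give (b) but break locality at the seams between copies. Reconciling these — essentially building a bounded-distortion embedding of a $(1+\epsilon)$-dimensional graph into $\RR^{1+\epsilon'}$-sized boxes — is the technical crux, and it is exactly the place where the precise prefractal combinatorics (how the $8$-ish surviving subsquares of each square sit relative to each other) has to be used rather than just the Hausdorff dimension bound. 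I expect the cleanest route is an inductive embedding where $B_{i+1}$'s copies of $B_i$ are arranged along a short ``connector'' path through the frame, and the third Euclidean dimension is used only to route these connectors without crossings, so that linear extent grows sublinearly in $A^i$; making the induction close with uniform constants is where the work is.
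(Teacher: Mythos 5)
The paper does not actually prove this theorem; it explicitly defers the proof to a separate paper, offering only a dimension-count heuristic ($B_i \times B_i$ has Hausdorff dimension $2(1+\epsilon)<3$) and a pointer to the random-embedding techniques of Gromov--Guth and Portnoy et al.~\cite{gromov2012generalizations,portnoy2023local}. The route the paper hints at is therefore to randomly embed the product $B_i \times B_i$ directly as a bounded-degree complex of fractal dimension below $3$; your plan---pre-compress each factor $B_i$ inside $\RR^2$, take a coordinate-wise product, then fold $\RR^4$ down to $\RR^3$---is a genuinely different approach, and as written it has two unclosed gaps.

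The first gap is your step (1), which you yourself flag as the crux: you need a one-sided Lipschitz, bounded-density layout of the cell-adjacency graph of $B_i$ inside a planar box of side $A^{\alpha i}$ with $\alpha$ bounded away from $1$. Recursively stacking the $\Theta(A)$ sub-carpets into a smaller frame breaks Lipschitzness at the seams: two cells that are geometrically adjacent across a sub-carpet boundary get sent by the two independent sub-layouts to points $\Theta(\text{sub-box side})$ apart, and no mechanism is given to repair this. This is not a detail to be waved at---it is exactly where the ``$\dim<3$'' heuristic must be upgraded to a construction. The second gap is step (3). Knowing that the product sits in a small box in $\RR^4$ does not let you ``fold one coordinate into the others at constant density cost.'' A projection $\RR^4\to\RR^3$ multiplies local density by the length of the collapsed fiber, so for the result to have $O(1)$ density the $\RR^4$ image must be \emph{uniformly} sparse at every unit scale, not merely on average. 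The carpet, and hence the product, has local density $\Theta(1)$ at the unit-cell scale (cells are either present or absent), and the product even contains large axis-aligned flat pieces such as $[0,A^i]\times[0,1]\times[0,A^i]\times[0,1]$, so some fibers of a linear projection collect $\Theta(A^i)$ cells; nothing in the pre-compression of the factors visibly removes this non-uniformity. Smoothing out exactly this kind of density pileup is what the random-embedding machinery the paper cites is designed to do, so a correct version of your argument would have to rebuild an equivalent mechanism rather than rely on a coordinate-wise product and a linear fold. A smaller point: you set a (coarsely) bi-Lipschitz target, but the paper's notion of local embedding only requires the one-sided Lipschitz bound plus bounded density, and your own intermediate compression of $B_i$ necessarily shrinks diameters, so it cannot be bi-Lipschitz---one-sided Lipschitz with bounded density is the invariant you should carry through.
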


The proof of the theorem will be demonstrated in a separate paper.
Notice that Sierpiński carpet with parameter $A$ has Hausdorff dimension $1+\epsilon$
  with $\epsilon \to 0$ as $A \to \infty$.
This implies the volume of $B_i \times B_i$ scales as the diameter to the power of $2(1+\epsilon)$.
When $\epsilon$ is small enough,
  this power is less than $3$,
  which suggests the possibility to embed $B_i \times B_i$ into $\RR^3$.

Indeed, the random embedding techniques used in \cite{gromov2012generalizations,portnoy2023local}
  is morally saying that the embedding problem boils down to a comparison of dimensions.
Although, these were studied in the context of integer dimensions,
  we are able to apply those techniques to the context of fractals with non-integer dimensions.

We have now established a family of 3D local quantum codes.
However, the challenge is to demonstrate that these codes satisfy the small-set isoperimetric inequality and are self-correcting.
Again, these problems are harder than the classical versions.
One should first solve \Cref{conj:isoperimetric-2} before attempting the following problems.

\begin{conjecture} \label{conj:isoperimetric-3}
  For any $A > 0$,
    there exist two classical codes with the parity check matrices $H_1: \FF_q^{n_1} \to \FF_q^{n_1'}$, $H_2: \FF_q^{n_2} \to \FF_q^{n_2'}$
      and local embeddings $[n_1] \sqcup [n_1'] \to B_i$, $[n_2] \sqcup [n_2'] \to B_i$.
  Let $X: \FF_q^{X(0)} \xrightarrow{\delta_0} \FF_q^{X(1)} \xrightarrow{\delta_1} \FF_q^{X(2)}$ be the tensor product of $H_1$ and $H_2$,
    where $X: \FF_q^{n_1} \otimes \FF_q^{n_2} \to
      (\FF_q^{n_1} \otimes \FF_q^{n_2'}) \oplus (\FF_q^{n_1'} \otimes \FF_q^{n_2}) \to
      \FF_q^{n_1'} \otimes \FF_q^{n_2'}$.
  Then, for any $c \in \FF_q^{X(1)}$ with $|c| \le n^\mu$,
    $|\delta_1 c| \ge \lambda (\min_{b \in \im \delta_0} |c + b|)^{\nu}$
    and $|\partial_1 c| \ge \lambda (\min_{b \in \im \partial_2} |c + b|)^{\nu}$,
    for some constant $\lambda, \mu, \nu > 0$.
\end{conjecture}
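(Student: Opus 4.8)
The plan is to first settle \Cref{conj:isoperimetric-2}, and then to \emph{transfer} the small-set isoperimetric inequality from the two classical factors to their tensor product $X = H_1 \otimes H_2$.

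The first observation is that the two displayed inequalities are dual. Writing $\partial_i = \delta_{i-1}^T$, one has $\im\partial_2 = \im\delta_1^T = (\ker\delta_1)^\perp$, and the transposed complex $\FF_q^{X(2)} \xrightarrow{\delta_1^T} \FF_q^{X(1)} \xrightarrow{\delta_0^T} \FF_q^{X(0)}$ is precisely the tensor product of $H_1^T$ and $H_2^T$. Since \Cref{conj:isoperimetric-2} supplies codes for which \emph{both} $H_i$ and $H_i^T$ have the classical small-set property, it suffices to prove a single statement: \emph{if $H_1,H_2$ are classical codes all four of whose maps $H_1,H_1^T,H_2,H_2^T$ satisfy the classical small-set isoperimetric inequality, then the middle map $\delta_1$ of $X=H_1\otimes H_2$ satisfies $|\delta_1 c| \ge \lambda\,\operatorname{dist}(c,\im\delta_0)^\nu$ for every $c$ with $|c|\le n^\mu$}, for suitable $\lambda,\mu,\nu>0$; applying it to the pair $(H_1^T,H_2^T)$ then yields the second inequality.

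For that statement I would write $c=(c_A,c_B)$ with $c_A\in\FF_q^{n_1}\otimes\FF_q^{n_2'}$, $c_B\in\FF_q^{n_1'}\otimes\FF_q^{n_2}$, put $s=\delta_1 c=(H_1\otimes 1)c_A\pm(1\otimes H_2)c_B$, and split into two regimes. If $s=0$, then $c\in\ker\delta_1$; the classical inequality forces the distances of $\ker H_i$ and $\ker H_i^T$ to exceed $n^\mu$, and the Tillich--Z\'emor-type distance bound for hypergraph products then gives quantum distance exceeding $n^{\mu'}$ for some $\mu'>0$, so after shrinking $\mu$ below $\mu'$ we get $c\in\im\delta_0$ and the inequality is vacuous. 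If $s\ne 0$, I would construct an explicit $z\in\FF_q^{X(0)}$ with $|c-\delta_0 z|\le\poly(|s|)$ by a two-phase ``robust decoding'' of the tensor structure: apply the small-set inequality of $H_1$ column-by-column to $c_A$ (each column has weight $\le|c|\le n^\mu$, so this is legal), then apply the small-set inequality of $H_2$ row-by-row to $c_B$, and assemble the resulting local corrections into one $z$ using $\delta_1\delta_0=0$. As $H_1,H_2$ are LDPC, each phase perturbs the other part of $c$ by only a bounded factor, so a potential argument should make the process terminate with residual weight $\poly(|s|)$ while remaining in the small-set regime (which pins down a relation between $\mu$ and $\nu$). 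Then $\operatorname{dist}(c,\im\delta_0)\le|c-\delta_0 z|\le\poly(|s|)$, which rearranges to the claim.

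The hard step is the second regime, and concretely the fact that \emph{robustness does not tensorize for free}: the Valiant and Coppersmith--Rudra counterexamples to robust testability of tensor codes make it likely that \Cref{conj:isoperimetric-2} alone is insufficient, and that $H_1,H_2$ must be chosen with an extra structural ``product-expansion'' property --- an analogue of the smoothness of Dinur--Sudan--Wigderson or of the maximal extendability used in recent qLDPC constructions --- under which the column and row syndromes created during the clean-up stay of size $\poly(|s|)$ rather than $\poly(|c|)$. Proving that a random local code on the scaled prefractal $B_i$ has this property (for all of $H_1,H_2,H_1^T,H_2^T$ at once), perhaps after pruning bad configurations via the Lov\'asz Local Lemma as in the discussion following \Cref{conj:isoperimetric-2}, is where the real difficulty lies. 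The remaining points --- choosing the $\FF_q$ signs so that $\delta_1\delta_0=0$, keeping every intermediate object within $|\cdot|\le n^\mu$, and tracking the constants $\lambda,\mu,\nu$ through the reductions --- are bookkeeping rather than essential obstacles.
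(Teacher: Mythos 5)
The statement you are proving is \Cref{conj:isoperimetric-3}, which the paper leaves as an open conjecture and does not prove --- indeed it is explicitly made to depend on another open conjecture, \Cref{conj:isoperimetric-2}. There is therefore no argument in the paper to compare yours against; I can only assess the strategy on its own terms, and to your credit you present it honestly as a strategy with a gap rather than as a proof.

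Your reductions are correct. The transposed complex really is the tensor product of $H_1^T$ and $H_2^T$ (up to relabelling the two summands of the middle term), so the $\partial_1$ inequality follows from the $\delta_1$ one once \Cref{conj:isoperimetric-2} supplies small-set expansion for all four of $H_1, H_1^T, H_2, H_2^T$. Your $s = 0$ regime is also fine: small-set expansion forces $d(\ker H_i), d(\ker H_i^T) > n_i^{\mu_i}$, the Tillich--Z\'emor bound lifts this to a polynomial quantum distance for the product, and shrinking $\mu$ makes the case vacuous; the only bookkeeping is that $n \approx n_1 n_2$ roughly halves the usable exponent, which you handle by calling it a choice of $\mu$.

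The $s \neq 0$ regime is where the content is, and you name the right obstruction: small-set coboundary expansion does not tensorize (Valiant, Coppersmith--Rudra), so a product-expansion hypothesis on the pair $(H_1,H_2)$ is unavoidable, and \Cref{conj:isoperimetric-2} as stated is almost certainly too weak an input on its own. Your column-then-row clean-up is the natural route, but the aggregation step is exactly where the known counterexamples bite: cleaning $c_A$ column by column reinjects weight into $c_B$, and without a product-expansion inequality there is no monotone potential --- ``a potential argument should make the process terminate'' is where the known failures live, not bookkeeping. One further difficulty worth emphasizing more than you do: product expansion is presently established only in regimes with no geometric locality constraint (random codes, lossless expanders, and close relatives). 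It is not known how to achieve it even for codes local on a lattice, and locality on the prefractal $B_i$ is a strictly stronger constraint. Reconciling ``geometrically local on a fractal'' with ``product-expanding'' is the genuine open mathematical content of \Cref{conj:isoperimetric-3}; any proof would need to build the factors in \Cref{conj:isoperimetric-2} with product expansion as an explicit design target, not merely small-set expansion of each factor in isolation.
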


\begin{conjecture}
  Under the same conditions as \Cref{conj:isoperimetric-3},
    the quantum code $X$ is self-correcting.
\end{conjecture}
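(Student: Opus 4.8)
The plan is to adapt the arguments that establish thermal stability of the classical two-dimensional Ising memory and of the four-dimensional toric code \cite{alicki2010thermal}, with the small-set isoperimetric inequality of \Cref{conj:isoperimetric-3} playing the role that the Euclidean inequality $|\partial A|\ge\lambda|A|^{1/2}$ plays there. Since $X$ is CSS and the Glauber dynamics decouples the two error types, it suffices to show, for each type separately, that the accumulated error remains correctable up to a time $T_{mem}(n)\to\infty$; as the hypotheses of \Cref{conj:isoperimetric-3} are symmetric between the $\delta$- and $\partial$-directions, I treat only the errors detected by $\delta_1$, with energy $\epsilon(c)=|\delta_1 c|$ and Gibbs weight $\propto e^{-\beta|\delta_1 c|}$.

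\textbf{The decoder.} First I would fix $D$ to be a sequential renormalization/clustering decoder of the kind used for the three-dimensional cubic code \cite{bravyi2011analytic}: on an increasing sequence of length scales $1=R_0<R_1<\cdots$, at scale $R_k$ one groups the violated checks into connected clusters and, for every cluster of diameter at most $R_k$, applies a minimum-reduced-weight correction supported in an $O(R_k)$-neighborhood, passing larger clusters on to the next scale; the recursion stops once $R_k$ reaches $n^\mu$, the cutoff of \Cref{conj:isoperimetric-3}. The isoperimetric inequality enters for \emph{local consistency}: if an error cluster has reduced weight at most $n^\mu$, then by \Cref{conj:isoperimetric-3} any two corrections of its syndrome of reduced weight at most $n^\mu$ differ by an element of $\im\delta_0$, so the decoder's choice is canonical and, starting from $\tilde c(0)=0$, correct; applied to a putative logical operator the same inequality forces $d_X>n^\mu$, so no cluster of reduced weight at most $n^\mu$ can itself be logical.

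\textbf{Thermodynamic core.} The heart of the argument is to show that, for $\beta$ above a constant threshold depending only on $\lambda,\nu$ and the LDPC degree, and for all $T$ up to some $T_{mem}(n)\to\infty$, the probability that \emph{any} error cluster ever exceeds diameter $n^\mu$ on $[0,T]$ is below $1/3$. I would combine three ingredients: (i) an entropy count — because $X$ is LDPC and geometrically local on $B_i\times B_i$, the check-adjacency graph has bounded degree, so the number of connected sets of $\ell$ violated checks through a fixed check, hence of irreducible excitations of energy $\ell$ anchored at a point, is at most $e^{\kappa\ell}$ for a constant $\kappa$, the analog of the ``$\le L^2e^{\kappa w}$'' count of the classical discussion; (ii) \Cref{conj:isoperimetric-3} in contrapositive form — an irreducible excitation whose error has reduced weight $w\le n^\mu$ has energy at least $\lambda w^\nu$, and by locality its diameter is at most a fixed power of $w$, so a cluster of diameter $\Omega(n^\mu)$ carries energy at least $\lambda' n^{\mu'}$ for constants $\lambda',\mu'>0$; (iii) a Peierls estimate, first static then dynamic — under the Gibbs state the probability that a fixed check lies in a connected excitation of energy at least $\ell$ is $\lesssim e^{-(\beta-\kappa)\ell}$ once $\beta>\kappa$, and upgrading to the Glauber trajectory on $[0,T]$ costs a factor $\mathrm{poly}(n)\cdot T$ from a space--time union bound over where and when a large cluster could nucleate, via the mapping of trajectories to space--time contours used in the four-dimensional analysis \cite{alicki2010thermal}. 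Together these give a bound $\mathrm{poly}(n)\cdot T\cdot e^{-(\beta-\kappa)\lambda' n^{\mu'}}$, which is below $1/3$ for all $T\le T_{mem}(n):=e^{\frac12(\beta-\kappa)\lambda' n^{\mu'}}/\mathrm{poly}(n)$; since $T_{mem}(n)\to\infty$ and on this good event $D(\tilde c(t))=0$ — every cluster corrected, residual a stabilizer because $d_X>n^\mu$ — and likewise for the other error type, the family $\{X\}$ is self-correcting.

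\textbf{Main obstacle.} The hard step is (iii), the \emph{dynamic} Peierls estimate, and two difficulties compound there. First, $\nu$ is expected to be strictly below $1$ (it is $1/2$ already in the Euclidean model), so the entropy--energy competition must never be run against the error weight $w$, since $e^{\kappa w}e^{-\beta\lambda w^\nu}$ does not decay; it has to be organized so that entropy is always weighed against the \emph{syndrome} weight ($=$ energy), with \Cref{conj:isoperimetric-3} invoked only at the end to certify that a small-syndrome cluster is a small, hence correctable, error. Second — and this is why the present statement is a separate conjecture rather than a corollary of \Cref{conj:isoperimetric-3} — a genuine logical fault need not pass through any high-energy configuration: as the two-dimensional toric code shows, a logical string can be grown one qubit at a time while the syndrome stays $O(1)$, because the intermediate configurations have full weight exceeding $n^\mu$ and hence escape the small-set inequality. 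Closing this gap requires proving that any trajectory that accumulates a logical operator must nucleate an irreducible excitation of diameter $\Omega(n^\mu)$ — not merely a large homologically trivial structure — which in turn needs control of the decomposition of large-weight configurations into irreducible pieces together with a quantitative bound on the number of such pieces of given energy. This is exactly the part that existing self-correction proofs treat as their technical core, and it is the reason to settle \Cref{conj:isoperimetric-3}, and likely a strengthening of it past the $n^\mu$ cutoff, first.
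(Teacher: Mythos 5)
The statement you set out to prove is explicitly labeled a \emph{conjecture} in the paper, and the paper provides no proof of it; indeed the paper cautions that the quantum problem is ``harder than the classical versions'' and that one should first settle the purely classical isoperimetric conjecture before attempting it. So there is no paper proof for your attempt to be compared against, and you should not expect your sketch to close the question.

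That said, your outline is a faithful rendering of the program the paper gestures at. The three-ingredient ``thermodynamic core''---(i) a bounded-degree entropy count for connected excitations, (ii) the small-set isoperimetric inequality as the analog of $|\partial A|\ge\lambda|A|^{1/2}$, and (iii) a static-then-dynamic Peierls estimate lifted along the lines of the 4D toric-code analysis---matches the heuristics the paper lays out for the classical case in the discussion of irreducible words and partition-function factorization (\Cref{sec:m-2}), transplanted to the hypergraph product with a Bravyi--Haah-style RG decoder. This is the right shape of argument, and your observation that the entropy must be weighed against the syndrome weight rather than the error weight (since $\nu<1$ makes $e^{\kappa w-\beta\lambda w^\nu}$ grow) is a genuine and important point that the paper does not spell out.

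The gap you name in your final paragraph is real and is precisely why the paper leaves this as a conjecture. \Cref{conj:isoperimetric-3} controls the syndrome only for configurations of weight at most $n^\mu$; a Glauber trajectory from $0$ to a logical operator necessarily passes through configurations of larger weight, where the inequality is silent, and the existence of a large energy barrier (which the inequality does imply up to the cutoff) is known to be insufficient for self-correction---the paper devotes \Cref{sec:non-construction-recent-codes} to exactly this distinction. What is missing, as you say, is a proof that any such trajectory must nucleate a single large \emph{irreducible} excitation whose syndrome is controlled, i.e.\ a decomposition theorem for large-weight configurations into irreducible pieces together with a count of those pieces; this is the quantum analog of Conjecture 3.10 on irreducible words and it is not supplied by \Cref{conj:isoperimetric-3} alone. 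Your proposal is therefore best read not as a proof but as a correct statement of what a proof would require, and it identifies the same missing lemmas the paper implicitly flags as open. One minor caveat: the 2D toric-code example you invoke to illustrate the escape past the $n^\mu$ cutoff does not itself satisfy any small-set isoperimetric inequality, so it shows that energy barriers from \emph{outside} such an inequality can fail, but it does not by itself exhibit a code \emph{satisfying} \Cref{conj:isoperimetric-3} whose trajectories evade the barrier; establishing or refuting that remains part of the open problem.
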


\section{A characterization of geometrically local codes} \label{sec:characterization}

In this section we describe a family of local codes
  such that for any geometrically local code,
  there is an ``equivalent'' code within this family,
  which has the same code properties up to a constant factor\footnote{
    Formally, the equivalence between LDPC codes can be defined as chain homotopy with sparsity constraint.
    See, \cite[Sec 2.1]{lin2024transversal}.
  }.
This can be thought of as the counterpart to the polynomial formalism discussed in \Cref{sec:construction-polynomial}
  which characterizes geometrically local translation-invariant codes.
The benefit of this description is that it provides insights into the space of geometrically local codes
  and can be used to define the notion of random geometrically local codes.
This formalism can be extended to codes with arbitrary locality constraints.
Broadly speaking, this perspective aligns with the literature on topological defect networks \cite{aasen2020topological,song2023topological}.

\subsection{Classical codes}

We describe the family of 2D classical local codes.
The notation follows closely with \cite{lin2024transversal}.
Let $Y$ be the cellular complex that subdivides $\RR^2$ using the integer lattice.
We denote $Y(i)$ as the set of $i$-cells
  and $Y_i$ as the $i$-skeleton.
We use the notation $v \prec e$ to indicate that a lower-dimensional cell $v$ contained in a higher-dimensional cell $e$.
All vector spaces below are defined over $\FF_q$
  and have chosen bases.

Each edge is associated with a vector space $\cF_e$.
Each vertex is associated with a vector subspace $\cF_v \subseteq \prod_{e \succ v} \cF_e$.
The embedding induces a restriction map $\res_{v, e}: \cF_v \to \cF_e$
  for each $v \prec e$,
  which projects to the component at $e$.

The global code $C \subseteq \prod_{e \in Y(1)} \cF_e$ is defined as
  the set of $c \in \prod_{e \in Y(1)} \cF_e$
  such that $\{c(e)\}_{e \succ v} \in \cF_v$.
The parity-check matrix of the code can be explicitly written as follows.
Let $\prod_{v \in Y(0)} \cF_v \xrightarrow{\delta} \prod_{e \in Y(1)} \cF_e$,
  where $\delta(c)(e) = \sum_{v \prec e} \res_{v, e} c(v)$
  for $c \in \prod_{v \in Y(0)} \cF_v$.
One can verify that $\delta^T$ is the parity-check matrix of $C$,
  where the transpose is applied according to the chosen basis of the vector spaces.
It is clear that the code $C$ is geometrically local.

To summarize,
  specifying the code $C$ requires
  a 1D cellular complex that embeds into the desired geometry,
  a vector space $\cF_e$ for each edge $e \in Y(1)$,
  and a vector subspace $\cF_v \subseteq \prod_{e \succ v} \cF_e$
    for each vertex $v \in Y(0)$.
Notice that the choice of basis for each vector space affects the code properties only by a constant factor,
  since the code is LDPC.

We now sketch how to transform any 2D geometrically local classical code
  into a code within the above family.
First, draw a segment between every bit and check with nontrivial interaction.
  This forms the Tanner graph of the code.
Next, slightly deform the graph by a constant distance
  so that every bit and check lies in $Y_0$,
  and the segments are supported on $Y_1$.
Note that a deformed segment may span multiple cells in $Y(1)$.
This allows us to define the dimension of $\cF_e$ based on the number of segments on top of $e$.
Finally, the edges around a vertex are constrained by the bit and check requirements
  which specifies $\cF_v$.
Note that at the end the bits are placed on the edges and the check are placed on the vertices.
Essentially, this means that every 2D classical geometrically local code
  can be viewed as a Tanner code over the 1D skeleton on the integer lattice grid.

\begin{figure}[H]
  \centering
  \includegraphics[width=0.7\linewidth]{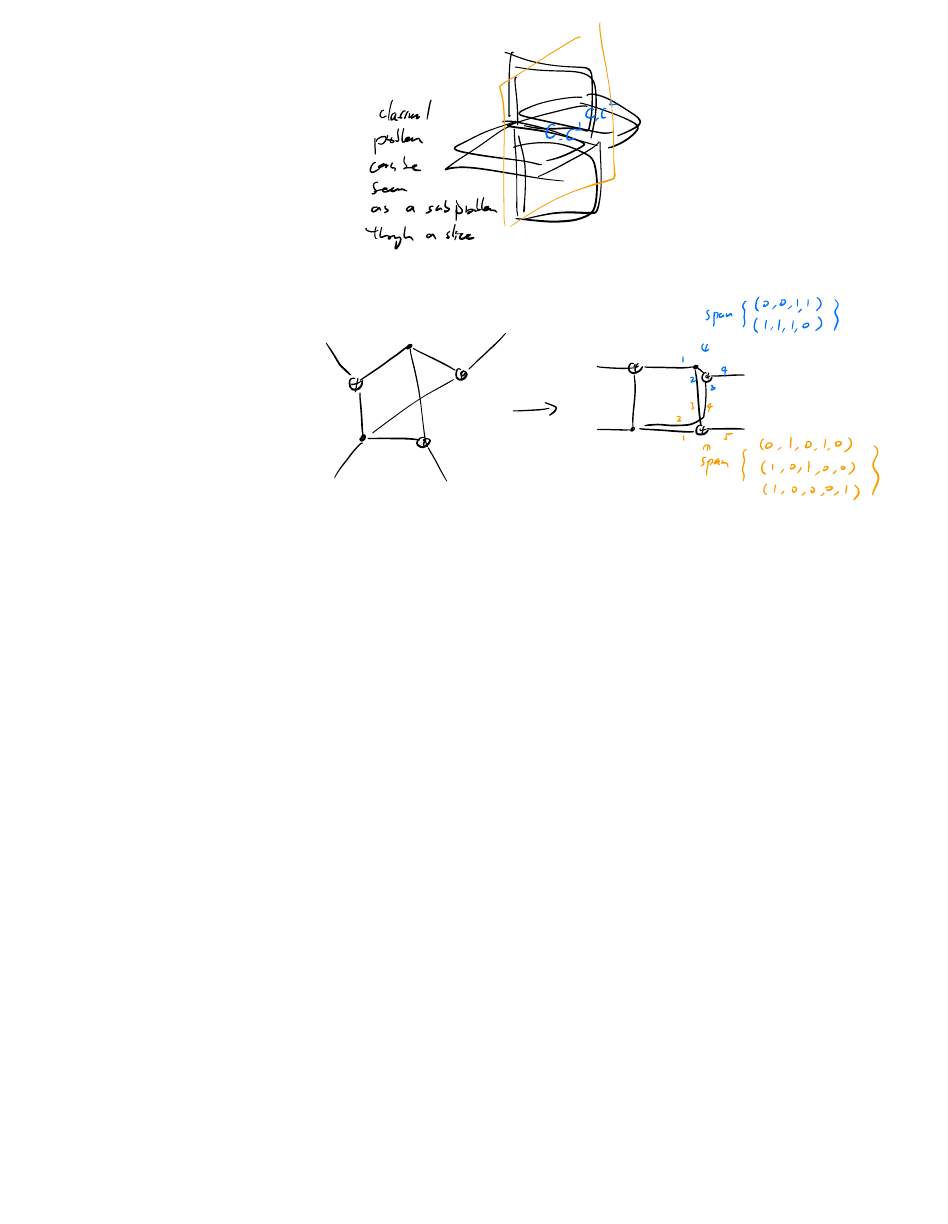}
  \caption{An example that transforms any geometrically local code into the family with a grid layout.}
  \label{fig:standardize-classical}
\end{figure}

\subsection{Quantum codes}

We describe the analogous family for 3D quantum local codes.
Let $Y$ be the cellular complex that subdivides $\RR^3$ using the integer lattice.
Each face is associated with a vector space $\cF_f$.
Each edge is associated with a vector subspace $\cF_e \subseteq \prod_{f \succ e} \cF_f$.
Each vertex is associated with a vector space $\cF_v \subseteq \prod_{f \succ v} \cF_f$,
  which consists of vectors ``consistent'' with $\cF_e$.
Specifically,
  $\cF_v = \{c \in \prod_{f \succ v} \cF_f: \forall e \in Y(1), \{c_f\}_{f \succ e} \in \cF_e\}$.
This naturally induces restriction maps between the cells $v \prec e \prec f$
  $\res_{v, e}, \res_{v, f}, \res_{e, f}$.
The quantum code is then induced from the chain complex,
  $\prod_{v \in Y(0)} \cF_v \xrightarrow{\delta_0} \prod_{e \in Y(1)} \cF_e \xrightarrow{\delta_1} \prod_{f \in Y(2)} \cF_f$,
  where $\delta_0(c)(e) = \sum_{v \prec e} \res_{v, e} c(v)$
    and $\delta_1(c)(f) = \sum_{e \prec f} \res_{e, f} c(e)$.

To summarize, specifying the quantum code requires a 2D cellular complex
  that embeds into the desired geometry,
  a vector space $\cF_f$ for each face $f \in Y(2)$,
  and a vector subspace $\cF_e$ for each edge $e \in Y(1)$.
Notice that $\cF_v$ is uniquely specified by the above data.
Not all combinations result in useful quantum codes.
For example, it may be beneficial to impose additional conditions, such as local acyclicity \cite[App A and B]{lin2024transversal}.

We sketch how to convert any local quantum code into a code within the above family.
First, we draw a segment between every qubit and check that have nontrivial interactions.
Next, we create squares based on the method from \cite{li2024transform},
  which we review here.
Consider a X check $v_0$ and a Z check $v_2$.
Since each X check and Z check share an even number of qubits,
  we can pair the qubits.
  Let $v_1, v_1'$ be such a pair.
We then form a square with vertices $v_0, v_1, v_2, v_1'$
  and repeat this process for all pairs of X and Z checks,
    along with their corresponding qubit pairs.
This square complex captures key properties of the code.

We can now slightly deform the square complex by a constant distance
  so that every qubit and check lies in $Y_0$,
    the deformed segments are supported on $Y_1$,
    and the deformed squares are supported on $Y_2$.
Note that a deformed segment may contain multiple cells in $Y(1)$
  and a deformed square might contain multiple cells in $Y(2)$.
This allows us to define $\cF_f$ by the number of surfaces overlaying on a 2-cell.
Finally, the faces around an edge are constrained by the structure of the code,
  which specifies $\cF_e$.

Overall, this means that every 3D quantum CSS geometrically local code can be viewed as
  a Tanner code over the 2D skeleton on the integer lattice grid.
Another way to say this is that every 3D quantum CSS geometrically local code can be viewed as
  multiple 2D surface codes stacked along the squares of the lattice grid,
  with carefully chosen condensation data along the edges.
This perspective aligns with the framework of topological defect networks \cite{aasen2020topological,song2023topological}.

This strategy can be extended to codes with arbitrary locality constraints.
Consider the family of quantum CSS codes that embeds in a space $Z$.
This family is equivalent to the family of Tanner codes on a 2-skeleton
  of a cellular decomposition $Y$ of $Z$,
  where each cell in $Y$ has a constant size $O(1)$.

\section{Nonconstruction: Brell's code} \label{sec:non-construction-brell}

An attempt to construct self-correcting quantum code based on fractals
  has been considered by Brell \cite{brell2016proposal}.
It was conjectured to be self-correcting,
  but we believe it is not the case.
At the same time, we believe that Brell's construction provides a subsystem code with
  $[[n = \Theta(L^{2+2\epsilon}), k = 1, d = \Omega(L^{1+\epsilon_2}), \cE = \Omega(L^\epsilon_3)]]$
  which fits into the cube $[0,L]^3$
  for some $\epsilon, \epsilon_2, \epsilon_3 > 0$.

The key difference between Brell's code and our fractal based construction in \Cref{sec:construction-fractal}
is that Brell uses the map from 0-cells to 1-cells of the prefractal\footnote{
  In fact, Brell described the code as a chain complex from 0-cells to 1-cells to 2-cells.
  Nevertheless, the chain complex structure is ``equivalent''
    to another map induced from 0-cells to 1-cells of a prefractal.
  The meaning of ``equivalent'' is chain homotopy equivalent,
    with the addition requirement that such equivalence can be implemented in a sparse way.
},
whereas we take a more complicated linear map supported on the prefractal,
  with small set expansion in both directions.
Note that the direction from 0-cells to 1-cells a has small set expansion,
  but this is not the case for the direction from 1-cells to 0-cells,
  which contains loop-like configurations.

We illustrate our claim that Brell's code is not self-correcting through a simpler example.
We consider a hypergraph product between the codes illustrated on the left of \Cref{fig:3D-toric-code-with-holes}.
It can be thought of as a 3D surface code with holes illustrated on the right of \Cref{fig:3D-toric-code-with-holes}.
The similarity to Brell's code is that the Sierpiński carpet also contains many holes.
As we will see, these holes modify the evolution of the system.

\begin{figure}[H]
  \centering
  \includegraphics[width=0.4\linewidth]{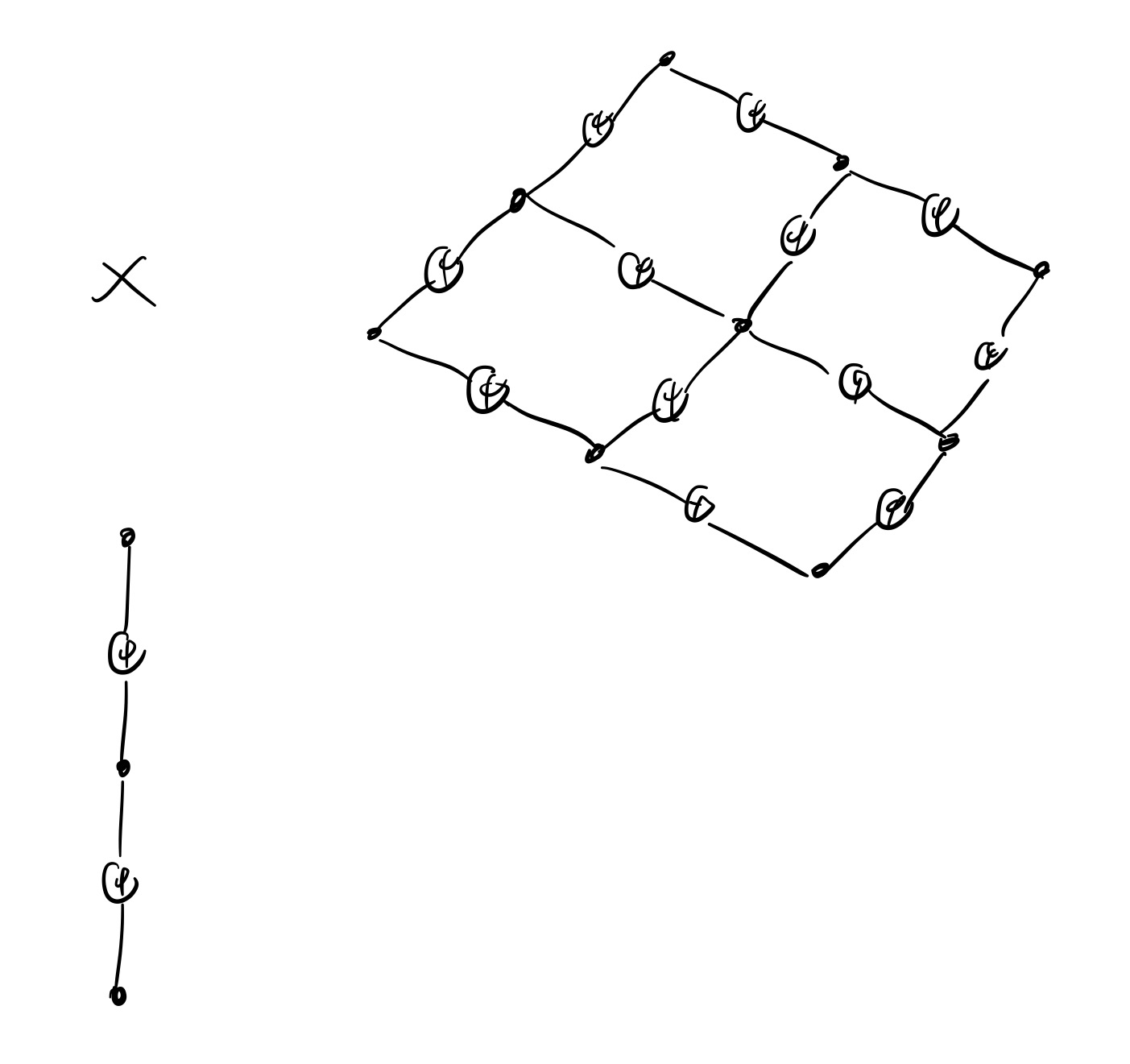}
  \hspace{1cm}
  \includegraphics[width=0.4\linewidth]{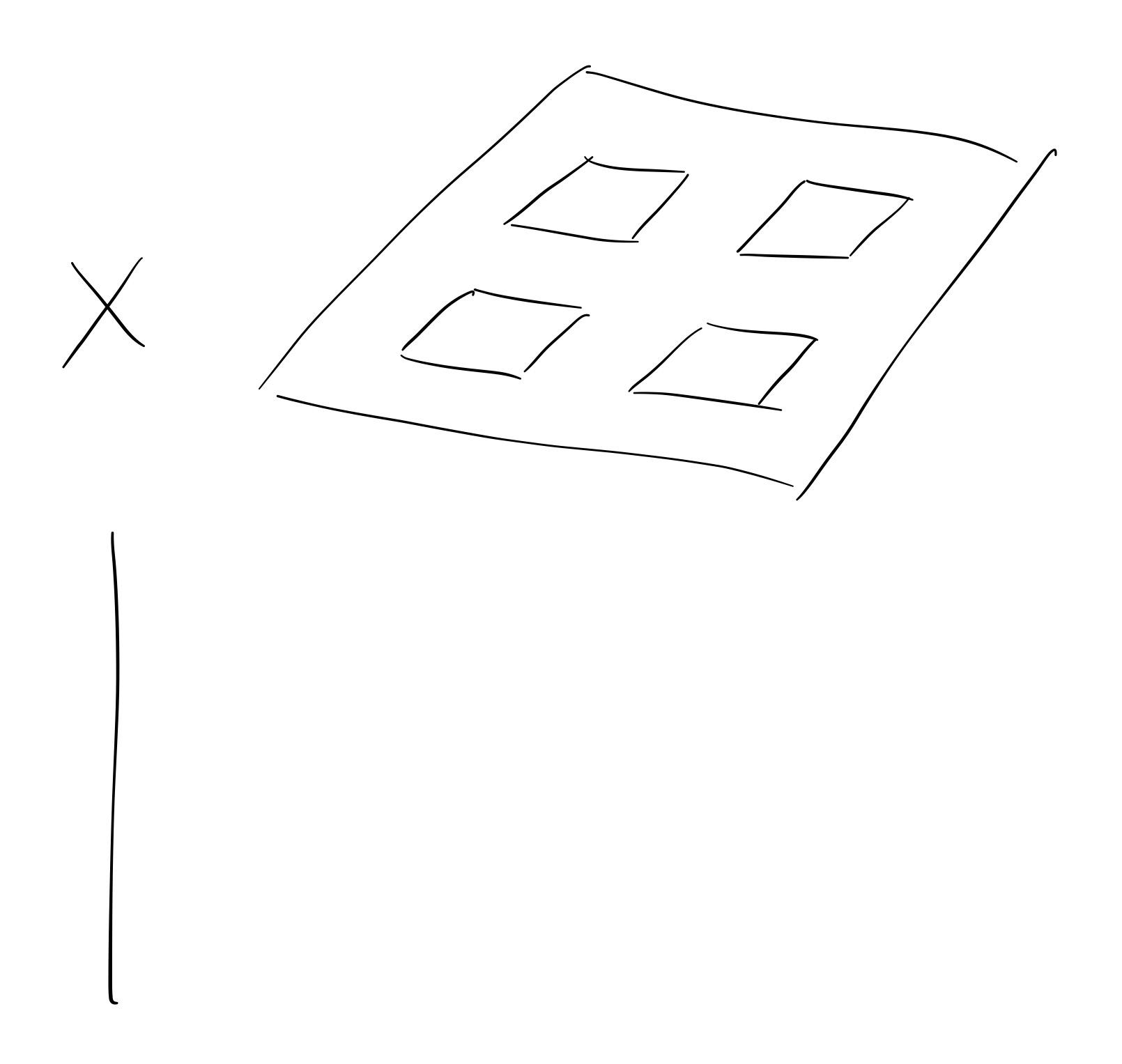}
  \caption{The figure shows a code that will be analyzed later.
            It can be viewed as a hypergraph product code between the two classical codes.
            Recall that to form a hypergraph product code,
              we put a qubit at vertices that corresponds to a product of two bits or two checks.
            This code can also be viewed as a 3D surface code over a manifold with holes.}
  \label{fig:3D-toric-code-with-holes}
\end{figure}

We first establish some basic facts of the code.
Because it is a hypergraph product code,
  it is clear that there is a nontrivial X-logical operator parallel to the plane
  as illustrated in \Cref{fig:3D-toric-code-with-holes-logical}.
Such an operator is analogous to the X-logical operator for a 3D surface code.
Assuming that all the holes have a constant size,
  this operator has distance $\Theta(L^2)$ and the energy barrier to implement it is $\Theta(L)$.
When the code is viewed as a subsystem code,
  these parameters are exactly the same as those for the traditional 3D surface code without holes.

\begin{figure}[H]
  \centering
  \includegraphics[width=0.38\linewidth]{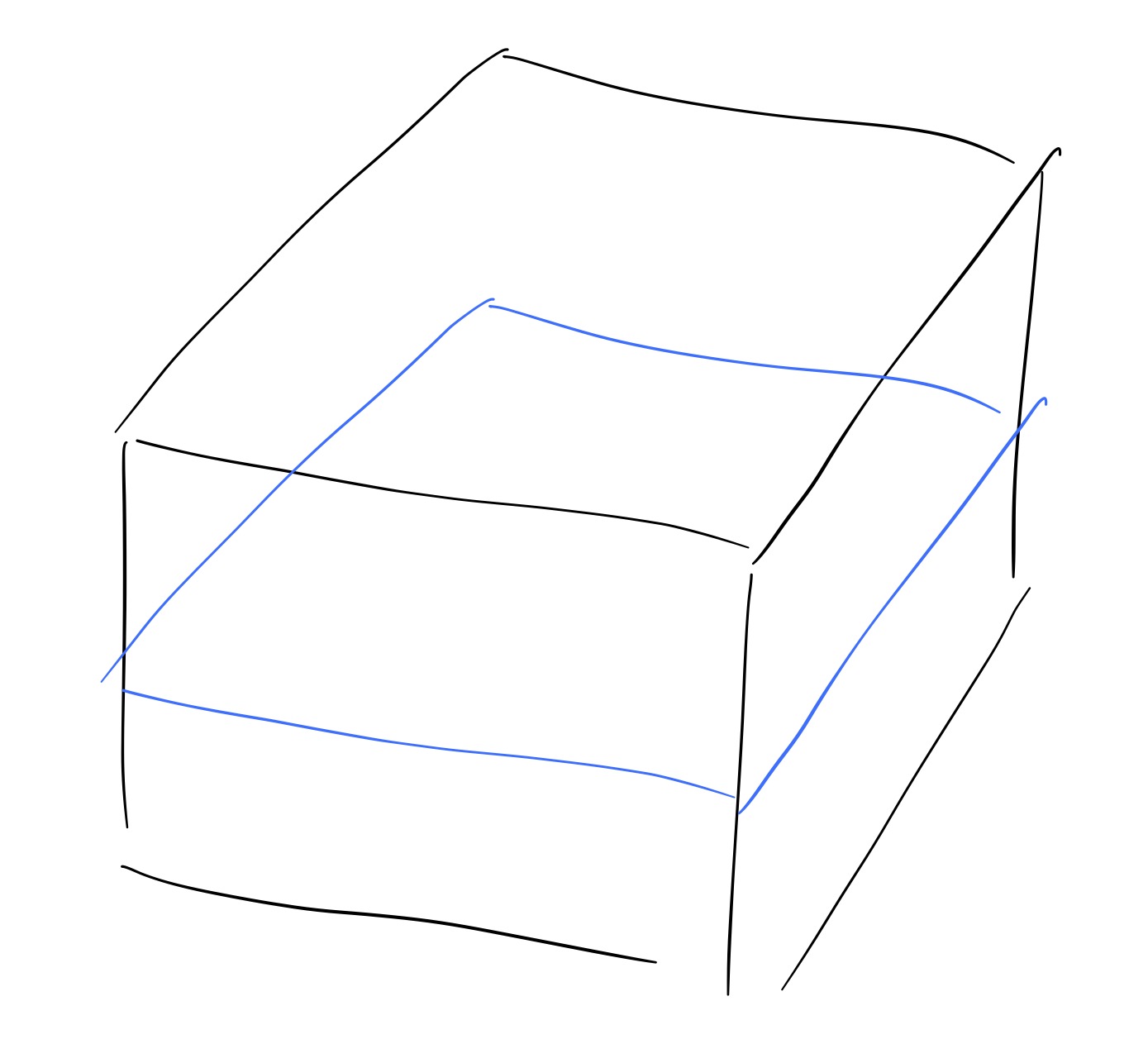}
  \hspace{1cm}
  \includegraphics[width=0.41\linewidth]{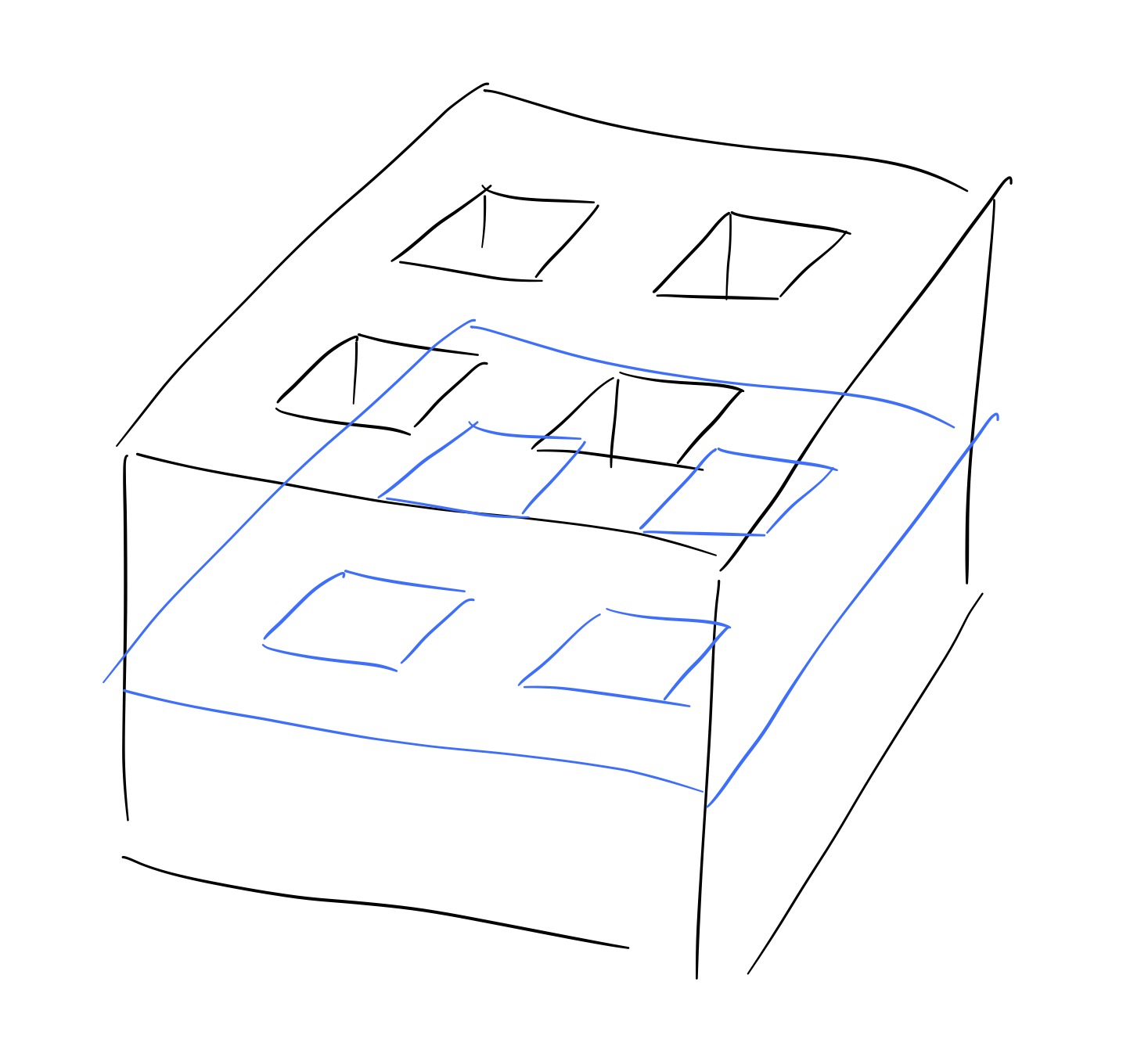}
  \caption{The figure shows the X-logical operator of the 3D surface code
            and the analogous X-logical operator of the code with holes.
            Both X-logical operators have a large distance and a large energy barrier.}
  \label{fig:3D-toric-code-with-holes-logical}
\end{figure}

It is known that such a X-logical operator is self-correcting for 3D surface codes.
The reason boils down to the fact that error patterns do not grow.
Instead, they want to contract and disappear.
In \Cref{fig:3D-toric-code-evolution},
  we see that if an error pattern grows,
  there is an associate energy penalty.
Recall, to read out the expectation of Z,
  assuming there is no error,
  the decoder draws a line from top to bottom
  and count the number of intersecting planes.
In the case with errors,
  because of the energy penalty,
  there will only be small loops.
That means to read out the expectation of Z,
  the decoder can again draw a line from top to bottom,
  with the additional process that detects and goes around small loops.

\begin{figure}[H]
  \centering
  \includegraphics[width=0.6\linewidth]{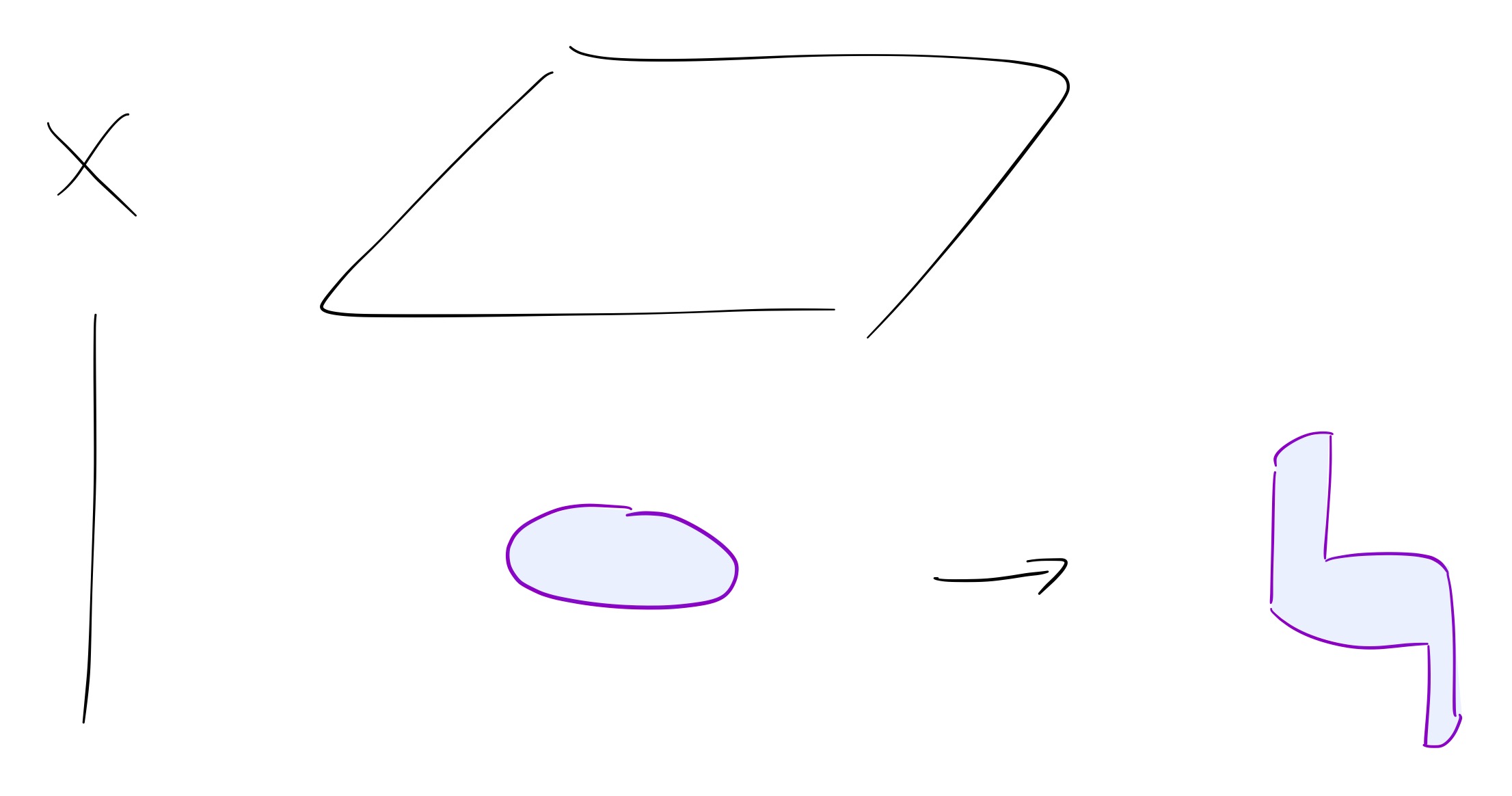}
  \caption{The error patterns are illustrated with the blue surfaces.
            The syndromes are illustrated with the purple lines.
            We see that the syndrome size increases when the error tries to grow.}
  \label{fig:3D-toric-code-evolution}
\end{figure}

However, the analogous X-logical operator for the code with holes is likely not self-correcting.
In \Cref{fig:3D-toric-code-with-holes-evolution},
  we see that an analogous error pattern can grow.
This is largely because the boundary condition around the holes
  makes the previous energy penalties associated with the vertical lines go away.
This causes the entropy effect to dominate,
  which prevents the error pattern from contracting and disappearing.
As a result, the decoder encounters difficulties
  in counting the number of planes
  intersecting a line from top to bottom.

\begin{figure}[H]
  \centering
  \includegraphics[width=0.6\linewidth]{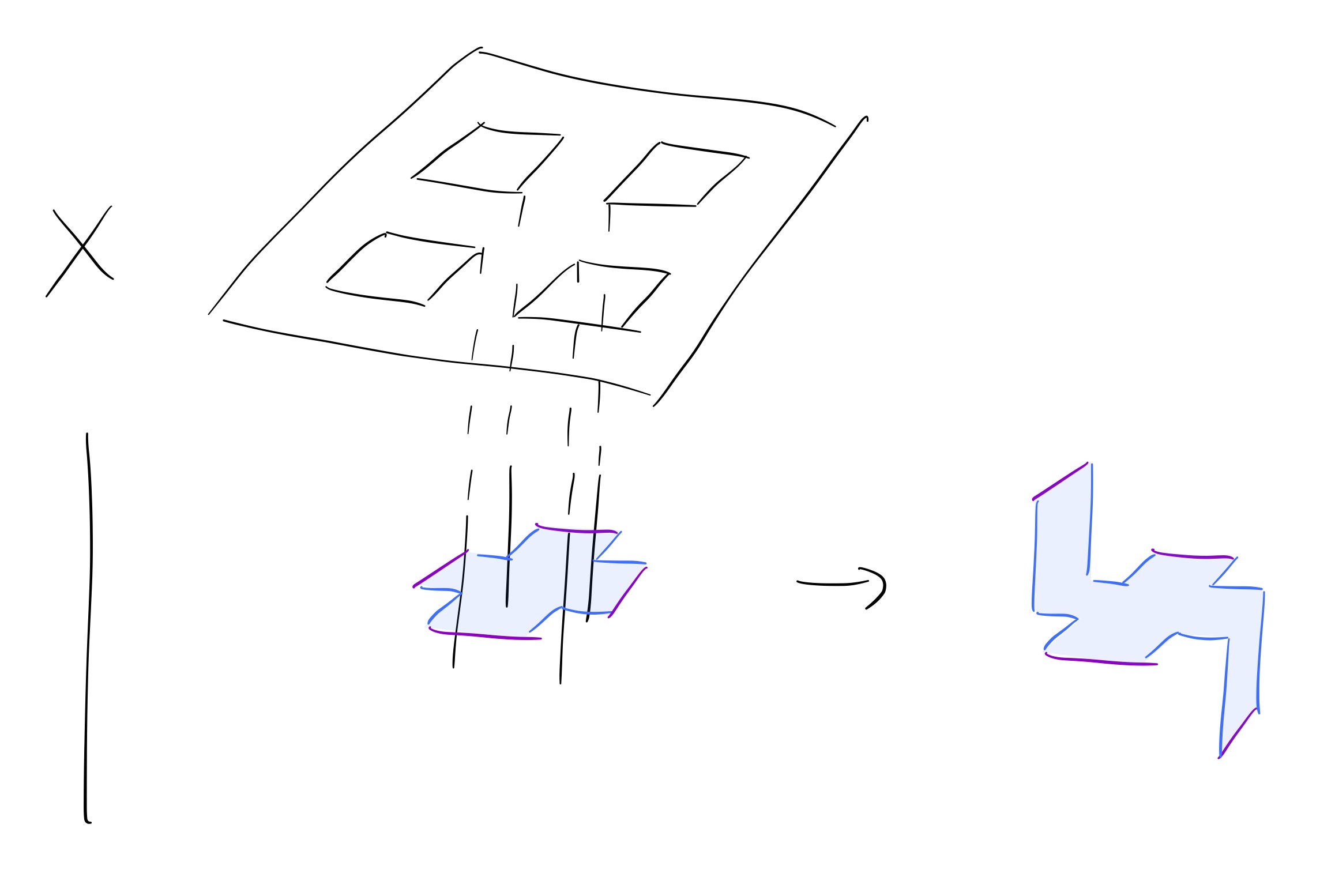}
  \caption{The error patterns are illustrated with the blue surfaces.
            The syndromes are illustrated with the purple lines.
            The boundary of the surface that does not contribute to the energy is illustrated with the blue lines.
            These blue lines are located on the boundary of the holes.
            We see that the syndrome size does not increase when the error grows in a certain way,
              even though the length of the boundary of the surface increases.}
  \label{fig:3D-toric-code-with-holes-evolution}
\end{figure}

It may be possible to add additional energy penalties at the boundaries
  to address issues about the holes.
Note that, however, this approach would go beyond the stabilizer model,
  and enter the realm of arbitrary 3D local Hamiltonian.

\section{Nonconstruction: 3D local code with linear energy barrier} \label{sec:non-construction-recent-codes}

Recently, there have been several constructions of 3D local quantum codes
  with optimal code dimension, distance, and energy barrier,
  $[[n = \Theta(L^3), k = \Theta(L), d = \Theta(L^2), \cE = \Theta(L)]]$
  \cite{portnoy2023local,lin2023geometrically,williamson2023layer,li2024transform}.
However, despite the large energy barrier,
  these codes likely do not have memory time $T_{mem} = \exp(\Theta(L))$.
Instead, they likely only have $T_{mem} \,\propto\, e^{a \beta - b}$
  for some constants $a, b$;
  the same behavior as a 2D toric code.

We use the construction in \cite{lin2023geometrically} to illustrate the point.
It is expected that other constructions have a similar behavior.
In \cite{lin2023geometrically},
  the code is obtained by gluing several 2D surface codes of size $L \times L$
  along their boundaries.
Its logical operators are tree-like with $\Theta(L)$ vertices
  whose vertices lies on the vertical boundaries
  and the edges runs across the surface codes.
See \Cref{fig:subdivided-code-logical-operator}.

\begin{figure}[H]
  \centering
  \includegraphics[width=0.3\linewidth]{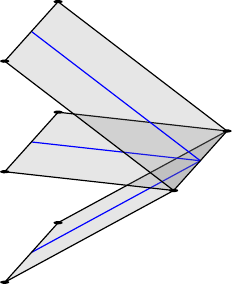}
  \caption{The figure shows part of the subdivided code described in \cite{lin2023geometrically} and one of its logical operators $\cL$.}
  \label{fig:subdivided-code-logical-operator}
\end{figure}

Since the anyon excitations in the 2D surface codes are mobile,
  after time $e^{\Theta(\beta)}$ they appear, diffuse, and spread across the surface
  with constant density of anyons (i.e. syndromes) everywhere.
See \Cref{fig:anyon-diffuse}.

\begin{figure}[H]
  \centering
  \includegraphics[width=0.3\linewidth]{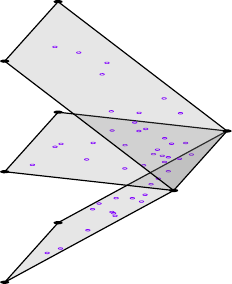}
  \caption{The figure shows how the system looks like after $e^{\Theta(\beta)}$ unit time.
            The anyons are represented by the purple circles.
            This is what the decode sees when it tries to recover the information.}
  \label{fig:anyon-diffuse}
\end{figure}

After time $e^{\Theta(\beta)}$, we suggest that it becomes impossible to distinguish between
  the case with initial state $|0\>$ over the case with initial state $\cL |0\>$,
  obtained by applying a logical operator $\cL$.
The reason is that for typical anyon configurations,
  the anyons have proliferated so much
  that there are multiple ways to connect them.
These ways may differ by a logical operator $\cL$,
  which means $|0\>$ becomes indistinguishable from $\cL |0\>$.
See \Cref{fig:indistinguishable-states}.

\begin{figure}[H]
  \centering
  \includegraphics[width=0.8\linewidth]{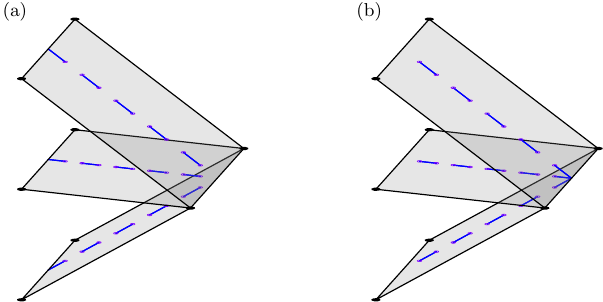}
  \caption{The figure shows two ways to decode a typical anyon configuration,
            which differ by a logical operator $\cL$.
            In general, there are more anyons.
            We only draw the relevant anyons to illustrate our point.}
  \label{fig:indistinguishable-states}
\end{figure}

\section{Discussions}

The paper discusses two approaches for constructing 3D self-correcting quantum stabilizer codes.
A related open problem is the construction of 4D self-correcting codes that allow transversal non-Clifford gates.
The known lower bound is 4D \cite{pastawski2015fault}
  and the current construction achieve this in 6D,
  using 6D color codes with transveral T gates \cite{bombin2013self}
  or 6D toric codes with transversal CCZ gates.
Based on the ideas presented in this work,
  by taking a product of three classical codes supported on a fractal with Hausdorff dimension $1 + \epsilon$,
  we suspect that it is possible to saturate the 4D lower bound,
  i.e. there exist 4D self-correcting CSS codes with transveral non-Clifford gates.

If we broaden the scope and relax the requirement that the Hamiltonian must be induced from a stabilizer code,
  the possibilities opens significantly.
Currently, there is no known lower bound on the dimensionality of self-correcting codes
  nor for self-correcting schemes that support universal quantum computation.
A recent work discovered a nonabelian TQFT beyond $\ZZ/2\ZZ$-gauge theory
  leading to a 5D self-correcting code with non-Clifford gates \cite{hsin2024non}.
It may also be worthwhile to explore quantum codes based on
  conformal field theory (CFT) \cite{sang2024approximate} or holographic CFT \cite{kibe2022holographic}.
CFTs offer several desireable properties,
  including the local indistinguishability of low energy states,
  and they allow different types of analysis using tools from quantum field theory.

\section*{Acknowledgements}
TCL thanks Wilbur Shirley and Jeongwan Haah for the discussion on Theorem 4.3 in \cite{haah2013lattice}.
TCL was supported in part by funds provided by the U.S. Department of Energy (D.O.E.) under the cooperative research agreement DE-SC0009919 and by the Simons Collaboration on Ultra-Quantum Matter, which is a grant from the Simons Foundation (652264 JM).

\bibliographystyle{unsrt}
\bibliography{references.bib}

\end{document}